\newtheorem{theorem}{Theorem}
\newtheorem{lemma}[theorem]{Lemma}
\newtheorem{corol}[theorem]{Corollary}
\theoremstyle{definition}
\theoremstyle{remark}
\newtheorem{remark}[theorem]{Remark}
\newcommand{\me}{\mathrm{e}}
\newcommand{\drift}{c}
\def\beq{\begin{eqnarray}}
\def\eeq{\end{eqnarray}}
\def\al*#1{\begin{align*}#1\end{align*}}
\def\ga*#1{\begin{gather*}#1\end{gather*}}
\def\alat*#1#2{\begin{alignat*}{#1}#2\end{alignat*}}
\def\bea{\begin{eqnarray*}}
\def\eea{\end{eqnarray*}}
\def\ml*#1{\begin{multline*}#1\end{multline*}}
\begin{document}
\title[]{On occupation times in the red of L\'{e}vy risk models}
\author[]{David Landriault}
\address{Department of Statistics and Actuarial Science, University of
Waterloo, Waterloo, ON, N2L 3G1, Canada}
\email{dlandria@uwaterloo.ca}
\author[]{Bin Li}
\address{Department of Statistics and Actuarial Science, University of
Waterloo, Waterloo, ON, N2L 3G1, Canada}
\email{bin.li@uwaterloo.ca}
\author[]{Mohamed Amine Lkabous}
\address{Department of Statistics and Actuarial Science, University of
Waterloo, Waterloo, ON, N2L 3G1, Canada}
\email{mohamed.amine.lkabous@uwaterloo.ca}
\date{\today }

\begin{abstract}
In this paper, we obtain analytical expression for the distribution of the
occupation time in the red (below level $0$) up to an (independent)
exponential horizon for spectrally negative L\'{e}vy risk processes and
refracted spectrally negative L\'{e}vy risk processes. This result improves
the existing literature in which only the Laplace transforms are known. Due
to the close connection between occupation time and many other quantities,
we provide a few applications of our results including future drawdown,
inverse occupation time, Parisian ruin with exponential delay, and the last
time at running maximum. By a further Laplace inversion to our results, we
obtain the distribution of the occupation time up to a finite time horizon
for refracted Brownian motion risk process and refracted Cram\'{e}r-Lundberg
risk model with exponential claims.
\end{abstract}

\keywords{Occupation time, Cumulative Parisian ruin, L\'{e}vy insurance
risk processes, scale functions}
\maketitle


\section{Introduction}

Occupation times measure the amount of time a stochastic process stays in a
certain region. It is a long-standing research topic in applied probability
and has wide applications in many fields. In finance,
occupation-time-related financial derivatives were studied under various
dynamics for the underlying asset (e.g., Cai et al. \cite{caietal2010}, Cai
and Kou \cite{caietal2011} and Linetsky \cite{linetsky1999}). In actuarial
mathematics, occupation times can naturally be used as a measure of the risk
inherent to an insurance portfolio. The occupation time of an insurer's
surplus process below a given threshold level (often chosen to be the
\textquotedblleft symbolic\textquotedblright\ level $0$) is particularly
critical in the assessment of an insurer's solvency risk (e.g., Landriault
et al. \cite{landriaultetal2014} and Gu\'{e}rin and Renaud \cite%
{guerinrenaud2015}). With this application in mind, we define the occupation
time in the red of a risk process $X$ in the time interval $(0,t)$ as 
\begin{equation*}
\mathcal{O}^{X}_{t}=\int_{0}^{t}\mathbf{1}_{\left( -\infty ,0\right) }\left(
X_{s}\right) \mathrm{d}s,  \label{Occupation}
\end{equation*}%
where 
\begin{equation*}
\mathbf{1}_{A}\left( x\right) =\left\{ 
\begin{array}{ll}
1,\text{ }x\in A, &  \\ 
0,\text{ otherwise.} & 
\end{array}%
\right.
\end{equation*}%
Its infinite-time counterpart $\mathcal{O}^{X}_{\infty }$ is defined as $%
\mathcal{O}^{X}_{\infty }=\int_{0}^{\infty }\mathbf{1}_{\left( -\infty
,0\right) }\left( X_{s}\right) \mathrm{d}s$.

There exists a number of results on the occupation time $\mathcal{O}^{X}_{t}$
in the literature. For the standard Brownian motion, the distribution of $%
\mathcal{O}^{X}_{t}$ appeared in L\'{e}vy's \cite{levy1940} famous arc-sine
law. This formula was generalized by Akahori \cite{akahori1995} and Tak\'{a}%
cs \cite{takacs1996} to a Brownian motion with drift. For the classical
compound Poisson process with some special jump distributions, Dos Reis \cite%
{dosreis1993} obtained the moment generating function of $\mathcal{O}%
^{X}_{\infty }$ using a martingale approach. Zhang and Wu \cite{zhangwu2002}
further solved the Laplace transform of $\mathcal{O}^{X}_{\infty }$ by
considering a compound Poisson process perturbed by an independent Brownian
motion.

Spectrally negative L\'{e}vy process is a wide used more general risk model,
which includes Brownian motions (with drift) and compound Poisson processes
(with diffusion) as special cases. Under this framework, Landriault et al. 
\cite{landriaultetal2011} first derived the Laplace transform of $\mathcal{O}%
^{X}_{\infty }$ in terms of the process' scale function. Loeffen et al. \cite%
{loeffenetal2014} generalized the results in \cite{landriaultetal2011} by
characterizing the joint Laplace transform of $\left( \tau _{b}^{+},\mathcal{%
O}^{X}_{\tau _{b}^{+}}\right) $ where $\tau _{b}^{+}=\inf \{t>0\colon
X_{t}>b\}$. Li and Palmowski \cite{li-palmowski2017} considered a further
extension by studying weighted occupation times. Also, potential measures
involving occupation times were also analyzed by Gu\'{e}rin and Renaud \cite%
{guerin_renaud_2015} and Li et al. \cite{lietal2015}.

Admittedly, it is most desirable to solve the distribution of $\mathcal{O}^{X}_{t}$, the occupation time up to a finite time horizon, but it is a highly
challenging problem. To the best of our knowledge, the distribution of $%
\mathcal{O}^{X}_{t}$ has been found for only two types of processes:
Brownian motions with drift by Akahori \cite{akahori1995} and compound
Poisson with exponential jumps by Gu\'{e}rin and Renaud \cite%
{guerinrenaud2015}.

The main theoretical result of this paper is that, for spectrally negative L%
\'{e}vy risk processes and refracted spectrally negative L\'{e}vy risk
processes, we obtain an analytical expression for the distribution of $%
\mathcal{O}^{X}_{ \mathrm{e}_{\lambda }}$, where $\mathrm{e}_{\lambda }$
denotes an independent exponential time horizon with rate $\lambda>0$. Note
that refracted spectrally negative L\'{e}vy risk processes was introduced by
Kyprianou and Loeffen \cite{kyprianouloeffen2010}). This class of processes
is of interest in a number of insurance applications. This includes dividend
payouts under a threshold strategy (e.g., Hern\'{a}ndez-Hern\'{a}ndez \cite%
{HH2016} and Czarna et al. \cite{czarna2018}) and variable annuities with a
state-dependent fee structure (e.g., Bernard et al. \cite%
{bernard_hardy_mackay_2014}). Results on occupation times on the refracted
spectrally negative L\'{e}vy process can be found in Kyprianou et al. \cite%
{kyprianou2014occupation}, Renaud \cite{renaud2014} and Li and Zhou \cite%
{li-zhou2018}. Note that in these papers, the aim was to identify the
Laplace transform of some occupation times while in this paper we partially
generalize their results by solving the distribution.

The main contributions of our paper are summarized below. First, since the
Laplace transform of $\mathcal{O}^{X}_{\mathrm{e}_{\lambda }}$, namely $%
\mathbb{E}\left[ \mathrm{e}^{-q\mathcal{O}^{X}_{\mathrm{e}_{\lambda }}}%
\right] $, is known in the literature (see Lemma 1 below), one may obtain
numerically the distribution of $\mathcal{O}^{X}_{t} $ via a double Laplace transform inversion (with respect to $q$ and $\lambda $).
However, it is well known that numerical Laplace inversion methods suffer from various stability issues, that may lead to significant computational
errors. By deriving a general expression for the distribution of $\mathcal{O}^{X}_{ \mathrm{e}_{\lambda }}$, we explicitly invert one Laplace transform, providing further structure to the problem in addition to saving one round of (numerical) Laplace inversion. Second,
since occupation time is closely related to many other quantities, we
provide a few applications of our results in Section 2.3 including future
drawdown, inverse occupation time, Parisian ruin with exponential delays, and
the last time at running maximum. Third, in addition to aforementioned
Brownian motions with drift and compound Poisson with exponential jumps, we
obtain the distribution of $\mathcal{O}^{X}_{t}$ for two more models in Section \ref{section6}: refracted Brownian motion risk process and refracted Cram\'{e}r-Lundberg
risk model with exponential claims.

The rest of the paper is organized as follows. In Section \ref{section1}, we
first present the necessary background material on spectrally negative L\'{e}%
vy processes and scale functions. We then derive the main results of this
paper and consider some relevant applications. We conclude this section by
providing some examples of L\'{e}vy risk processes. In Section \ref{section6}%
, we extend our study in parallel to the class of refracted spectrally
negative L\'{e}vy process.

\section{Occupation times of spectrally negative L\'{e}vy processes}

\label{section1}

\subsection{Preliminaries}

First, we present the necessary background material on spectrally negative L\'{e}vy processes. A L\'{e}vy insurance risk process $X$ is a process
with stationary and independent increments and no positive jumps. To avoid
trivialities, we exclude the case where $X$ has monotone paths. As the L\'{e}%
vy process $X$ has no positive jumps, its Laplace transform exists: for all $%
\lambda, t \geq 0$, 
\begin{equation*}
\mathbb{E} \left[ \mathrm{e}^{\lambda X_t} \right] = \mathrm{e}^{t
\psi(\lambda)} ,
\end{equation*}
where 
\begin{equation*}
\psi(\lambda) = \gamma \lambda + \frac{1}{2} \sigma^2 \lambda^2 +
\int^{\infty}_0 \left( \mathrm{e}^{-\lambda z} - 1 + \lambda z \mathbf{1}%
_{(0,1]}(z) \right) \Pi(\mathrm{d}z) ,
\end{equation*}
for $\gamma \in \mathbb{R}$ and $\sigma \geq 0$, and where $\Pi$ is a $%
\sigma $-finite measure on $(0,\infty)$ called the L\'{e}vy measure of $X$
such that 
\begin{equation*}
\int^{\infty}_0 (1 \wedge z^2) \Pi(\mathrm{d}z) < \infty .
\end{equation*}
Throughout, we will use the standard Markovian notation: the law of $X$ when
starting from $X_0 = x$ is denoted by $\mathbb{P}_x$ and the corresponding
expectation by $\mathbb{E}_x$. We write $\mathbb{P}$ and $\mathbb{E}$ when $%
x=0$.\newline
We recall the definitions of standard first-passage stopping times : for $b
\in \mathbb{R}$, 
\begin{align*}
\tau_b^- &= \inf\{t>0 \colon X_t<b\} \quad \text{and} \quad \tau_b^+ =
\inf\{t>0 \colon X_t > b\} ,
\end{align*}
with the convention $\inf \emptyset=\infty$.

We now present the definition of the scale functions $W_{q}$ and $Z_{q}$ of $%
X$. First, recall that there exists a function $\Phi \colon [0,\infty) \to
[0,\infty)$ defined by $\Phi_{q} = \sup \{ \lambda \geq 0 \mid \psi(\lambda)
= q\}$ (the right-inverse of $\psi$) such that 
\begin{equation*}
\psi ( \Phi_{q} ) = q, \quad q \geq 0 .
\end{equation*}
When $\mathbb{E}[X_1]>0$, we have 
\begin{equation}  \label{limesp}
\lim_{q \rightarrow 0}\dfrac{q}{\Phi_q}=\psi^{\prime }(0+)=\mathbb{E}[X_1].
\end{equation}
Now, for $q \geq 0$, the $q$-scale function of the process $X$ is defined as
the continuous function on $[0,\infty)$ with Laplace transform 
\begin{equation}  \label{def_scale}
\int_0^{\infty} \mathrm{e}^{- \lambda y} W_{q} (y) \mathrm{d}y = \frac{1}{%
\psi_q(\lambda)} , \quad \text{for $\lambda > \Phi_q$,}
\end{equation}
where $\psi_q(\lambda)=\psi(\lambda) - q$. This function is unique, positive
and strictly increasing for $x\geq0$ and is further continuous for $q\geq0$.
We extend $W_{q}$ to the whole real line by setting $W_{q}(x)=0$ for $x<0$.
We write $W = W_{0}$ when $q=0$. The initial values of $W_{q}$ are known to be
\begin{equation*}
\begin{split}
W_{q}(0+) &=
\begin{cases}
1/\drift & \text{when X has bounded variation;,} \\
0 & \text{when X has unbounded variation,}
\end{cases}\\ 
\end{split}
\end{equation*}
where $\drift := \gamma+\int^{1}_0 z \Pi(\mathrm{d}z) > 0$ is the drift of $X$. \newline
We also define another scale function $Z_{q}(x,\theta )$ by 
\begin{equation}  \label{eq:zqscale2}
Z_{q}(x,\theta )=\mathrm{e}^{\theta x}\left( 1-\psi_q (\theta ) \int_{0}^{x}%
\mathrm{e}^{-\theta y}W_{q}(y)\mathrm{d}y\right) ,\quad x\geq 0,
\end{equation}%
and $Z_{q}(x,\theta )=\mathrm{e}^{\theta x}$ for $x<0$. For $\theta=0$, 
\begin{equation}  \label{eq:zqscale}
Z_{q}(x,0)=Z_{q}(x) = 1 + q \int_0^x W_{q}(y)\mathrm{d }y, \quad x \in 
\mathbb{R}.
\end{equation}
For $\theta\geq \Phi_q$, using \eqref{def_scale}, the scale function $%
Z_{q}(x,\theta)$ can be rewritten as 
\begin{equation}  \label{Zv2}
Z_{q}(x,\theta )=\psi_q (\theta ) \int_{0}^{\infty}\mathrm{e}^{-\theta
y}W_{q}(x+y)\mathrm{d}y ,\quad x\geq 0.
\end{equation}
We recall the \textit{delayed $q$-scale function of $X$} introduced by
Loeffen et al. \cite{loeffenetal2017} defined as 
\begin{equation}
\Lambda ^{\left( q\right) }\left( x,r\right) =\int_{0}^{\infty }W_{q}\left(
x+z\right) \frac{z}{r}\mathbb{P}\left( X_{r}\in \mathrm{d}z\right) ,
\label{DS2}
\end{equation}
and we write $\Lambda =\Lambda ^{(0)}$ when $q=0$. Note that we can show 
\begin{equation}  \label{L11}
\Lambda ^{\left(q\right)}\left(0,r\right)= \mathrm{e}^{qr}.
\end{equation}
We also denote the partial derivative of $\Lambda ^{\left( q\right) }$ with
respect to $x$ by 
\begin{equation*}
\Lambda ^{(q)^{\prime }}(x,r)=\frac{\partial \Lambda ^{(q)}}{\partial x}%
(x,r)=\int_{0}^{\infty }W_{q}^{\prime }\left( x+z\right) \frac{z}{r}\mathbb{P%
}\left( X_{r}\in \mathrm{d}z\right) ,
\end{equation*}
For $x\leq 0$, we also have 
\begin{equation}  \label{lambdaiden}
\Lambda ^{\prime}\left(x,r\right) =\Lambda ^{\left( q\right) \prime }\left(
x,r\right)-q\int_{0}^{r }\Lambda ^{\left( q\right) \prime }\left( x,s\right) 
\mathrm{d}s-qW_q (x).
\end{equation}

Also, we recall the following identity stated in \cite{loeffenetal2014} and
related to scale functions : for $p,q,x\geq 0$, 
\begin{equation}  \label{convsln}
(s-p) \int_0^x W_{p}(x-y) W_{s}(y) \mathrm{d}y = W_{s}(x) - W_{p}(x).
\end{equation}
Finally, we recall Kendall's identity that provides the distribution off the
first upward crossing of a specific level (see \cite[Corollary VII.3]%
{bertoin1996}): on $(0,\infty) \times (0,\infty)$, we have 
\begin{equation}  \label{eq:Kendall}
r \mathbb{P}(\tau_z^+ \in \mathrm{d}r) \mathrm{d}z = z \mathbb{P}(X_r \in 
\mathrm{d}z) \mathrm{d}r .
\end{equation}
We refer the reader to \cite{kyprianou2014} for more details on spectrally
negative L\'{e}vy processes and fluctuation identities. More examples and
numerical computations related to scale functions can be found in e.g., \cite%
{kuznetsovetal2012} and \cite{surya2008}.

\subsection{Main results}

\label{section3}

\subsubsection{Distribution of occupation times}

This subsection presents our main result for the density of the occupation
time $\mathcal{O}^{X}_{\mathrm{e}_{\lambda }}$ for a spectrally negative L%
\'{e}vy process $X$, where, throughout this paper, $\mathrm{e}_{\lambda }$
denotes an exponential random variable with rate $\lambda >0$ that is
independent of the process $X$. We first give the following lemma on the
Laplace transform of $\mathcal{O}^{X}_{\mathrm{e} _{\lambda }}$. Note that this result is essentially known in the literature (one may integrate the expression of $\mathbb{E}_{x}\left[ \mathrm{e}^{-q\mathcal{%
O}^{X}_{\mathrm{e} _{\lambda }}};X_{\mathrm{e}_{\lambda }}\in \mathrm{d} y\right] $ (with respect to $y)$, see, e.g.,  Corollary 1 of Gu\'{e}rin and Renaud \cite{guerinrenaud2015}), but an alternative and shorter proof is provided her.
\begin{lemma}
For $\lambda ,q>0$ and $x\in \mathbb{R}$,  
\begin{equation}
\mathbb{E}_{x}\left[ \mathrm{e}^{-q\mathcal{O}^{X}_{\mathrm{e}_{\lambda }}}%
\right] =\lambda \dfrac{\left( \Phi _{q+\lambda }-\Phi _{\lambda }\right) }{%
\left( \lambda +q\right) \Phi _{\lambda }}Z_{\lambda }\left( x,\Phi
_{\lambda +q}\right) - \dfrac{qZ_{\lambda }(x)}{q+\lambda }+1.  \label{LT}
\end{equation}
\end{lemma}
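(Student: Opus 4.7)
My plan is to establish the identity via the strong Markov property, splitting according to the sign of the starting point and exploiting the memorylessness of $\me_{\lambda}$. Write $V(x):=\e_{x}\bigl[\me^{-q\Occ_{\me_{\lambda}}}\bigr]$. For $x<0$, set $\sigma:=\tau_{0}^{+}$; by spectral negativity, $X_{\sigma}=0$ on $\{\sigma<\infty\}$, and on $\{\me_{\lambda}<\sigma\}$ the path stays negative so that $\Occ_{\me_{\lambda}}=\me_{\lambda}$. Conditioning on which of $\me_{\lambda}$ and $\sigma$ comes first, using memorylessness and the strong Markov property on $\{\me_{\lambda}\geq\sigma\}$, and invoking $\e_{x}[\me^{-(\lambda+q)\sigma}]=\me^{\Phi_{\lambda+q}x}$ (classical for $x\leq0$), I obtain
\begin{equation*}
V(x)=\frac{\lambda}{\lambda+q}\bigl(1-\me^{\Phi_{\lambda+q}x}\bigr)+V(0)\,\me^{\Phi_{\lambda+q}x},\qquad x<0,
\end{equation*}
which expresses $V$ on the negative half-line in terms of the single unknown constant $V(0)$.

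For $x\geq 0$, set $\tau:=\tau_{0}^{-}$; no occupation time accrues on $[0,\tau)$, so an analogous splitting yields $V(x)=\p_{x}(\me_{\lambda}<\tau)+\e_{x}\bigl[\me^{-\lambda\tau}V(X_{\tau})\bigr]$. Substituting the expression for $V$ on $(-\infty,0]$ obtained above and applying the classical one-sided exit identities
\begin{equation*}
\e_{x}\bigl[\me^{-\lambda\tau}\bigr]=Z_{\lambda}(x)-\tfrac{\lambda}{\Phi_{\lambda}}W_{\lambda}(x),\qquad \e_{x}\bigl[\me^{-\lambda\tau+\Phi_{\lambda+q}X_{\tau}}\bigr]=Z_{\lambda}(x,\Phi_{\lambda+q})-\tfrac{q}{\Phi_{\lambda+q}-\Phi_{\lambda}}W_{\lambda}(x)
\end{equation*}
(the second using $\psi_{\lambda}(\Phi_{\lambda+q})=q$) recasts $V(x)$ as a linear combination of $1$, $Z_{\lambda}(x)$, $Z_{\lambda}(x,\Phi_{\lambda+q})$, and $W_{\lambda}(x)$, with $V(0)$ still to be pinned down.

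The final step, which is also the only delicate one, is to fix $V(0)$ by a boundedness argument: since $V\in[0,1]$ on $\reals$ while $Z_{\lambda}(x)$ and $Z_{\lambda}(x,\Phi_{\lambda+q})$ both grow like $\me^{\Phi_{\lambda}x}$ as $x\to\infty$, the coefficient of $W_{\lambda}(x)$ in the expression for $V(x)$ must vanish. This single linear constraint forces $V(0)=\lambda\Phi_{\lambda+q}/\bigl[(\lambda+q)\Phi_{\lambda}\bigr]$, and substituting back makes the expression collapse precisely to \eqref{LT}. Apart from that cancellation, the argument is purely a matter of organizing the exit identities together with the relation $\psi_{\lambda}(\Phi_{\lambda+q})=q$.
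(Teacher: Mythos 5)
Your route is genuinely different from the paper's: the paper reduces the statement in two lines to known results, identifying $\e_{x}\bigl[\me^{-q\Occ_{\me_{\lambda}}}\bigr]=1-\e_{x}\bigl[\me^{-\lambda\kappa^{q}}\bigr]$ via Proposition 3.4 of Gu\'erin--Renaud and quoting the Laplace transform of the Parisian ruin time $\kappa^{q}$ from Baurdoux et al., whereas you attempt a self-contained first-passage derivation. Your decompositions for $x<0$ and $x\geq0$, the use of memorylessness, and the two exit identities you invoke are all correct, and $V(0)=\lambda\Phi_{\lambda+q}/[(\lambda+q)\Phi_{\lambda}]$ is indeed the right constant.

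However, the step you yourself flag as the only delicate one --- pinning down $V(0)$ by boundedness --- does not work. Setting $c:=V(0)-\tfrac{\lambda}{\lambda+q}$, your relation for $x\geq0$ is exactly
\begin{equation*}
V(x)=1-\frac{q}{\lambda+q}\,\e_{x}\bigl[\me^{-\lambda\tau_{0}^{-}}\bigr]+c\,\e_{x}\bigl[\me^{-\lambda\tau_{0}^{-}+\Phi_{\lambda+q}X_{\tau_{0}^{-}}}\bigr],
\end{equation*}
and both expectations lie in $[0,1]$, so the right-hand side is bounded for \emph{every} value of $c$. Equivalently, when you expand in the functions $1$, $Z_{\lambda}(x)$, $Z_{\lambda}(x,\Phi_{\lambda+q})$, $W_{\lambda}(x)$, all three scale-function terms grow like $\me^{\Phi_{\lambda}x}$ with the precise constants $Z_{\lambda}(x)\sim\tfrac{\lambda}{\Phi_{\lambda}}W_{\lambda}(x)$ and $Z_{\lambda}(x,\Phi_{\lambda+q})\sim\tfrac{q}{\Phi_{\lambda+q}-\Phi_{\lambda}}W_{\lambda}(x)$, and the coefficient of the leading growth cancels identically in $c$ --- precisely because $Z_{\lambda}(x)-\tfrac{\lambda}{\Phi_{\lambda}}W_{\lambda}(x)$ and $Z_{\lambda}(x,\Phi_{\lambda+q})-\tfrac{q}{\Phi_{\lambda+q}-\Phi_{\lambda}}W_{\lambda}(x)$ are the bounded exit quantities. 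So boundedness imposes no constraint and $V(0)$ remains undetermined; the ``single linear constraint'' you appeal to is vacuous. To close the argument you need an extra input: in the bounded variation case, evaluate your $x\geq0$ relation at $x=0$ (where $\tau_{0}^{-}>0$ a.s.\ and $W_{\lambda}(0)=1/\drift>0$), which yields a nontautological equation whose unique solution is $V(0)=\lambda\Phi_{\lambda+q}/[(\lambda+q)\Phi_{\lambda}]$; the unbounded variation case (where this evaluation degenerates to $V(0)=V(0)$) then follows by the usual approximation argument, or alternatively by running the decomposition up to $\tau_{b}^{+}$ and letting $b\to\infty$ using the ratio limit $W_{\lambda}(b-x)/W_{\lambda}(b)\to\me^{-\Phi_{\lambda}x}$ rather than mere boundedness.
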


\begin{proof}
Using Proposition $3.4$ in Gu\'{e}rin and Renaud \cite{guerinrenaud2015}, one can deduce that  
\begin{equation*}
\mathbb{E}_{x}\left[ \mathrm{e}^{-q\mathcal{O}^{X}_{\mathrm{e}_{\lambda }}}%
\right] =\mathbb{P} _{x}\left( \mathcal{O}^{X}_{\mathrm{e}_{\lambda }}<%
\mathrm{e}_{q}\right) =\mathbb{P}_{x}\left( \kappa ^{q}>\mathrm{e} _{\lambda
}\right) =1-\mathbb{E}_{x}\left[ \mathrm{e}^{-\lambda \kappa ^{q}} \right] ,
\end{equation*}
where $\kappa ^{q}$ is the time of Parisian ruin with exponential delays 
defined as  
\begin{equation}
\kappa ^{q}=\inf \left\{ t>0:t-g_{t}>\mathrm{e}_{q}^{g_{t}}\right\} ,
\label{kappaq}
\end{equation}
where $g_{t}=\sup \left\{ 0\leq s\leq t\colon X_{s}\geq 0\right\} $ and $%
\mathrm{e} _{q}^{g_{t}}$ is an exponentially distributed random variable
with rate $q>0$ (independent of $X$). The Laplace transform of $\kappa ^{q}$
can be  extracted from Bardoux et al. \cite{baurdoux_et_al_2015} (see also
Albrecher  et al. \cite{albrecheretal2016}) and it is given by  
\begin{equation*}
\mathbb{E}_{x}\left[ \mathrm{e}^{-\lambda \kappa ^{q}}\right] =\dfrac{
qZ_{\lambda }(x)}{q+\lambda }-\lambda \dfrac{ \Phi _{q+\lambda }-\Phi
_{\lambda }}{\left( \lambda +q\right) \Phi _{\lambda }}Z_{\lambda
}\left( x,\Phi _{\lambda +q}\right) .
\end{equation*}
Therefore,
\begin{equation*}
\mathbb{E}_{x}\left[ \mathrm{e}^{-q\mathcal{O}^{X}_{\mathrm{e}_{\lambda }}}%
\right] =1-\mathbb{E }_{x}\left[ \mathrm{e}^{-\lambda \kappa ^{q}}\right]
=\lambda \dfrac{\Phi _{q+\lambda }-\Phi _{\lambda } }{\left(
\lambda +q\right) \Phi _{\lambda }}Z_{\lambda }\left( x,\Phi _{\lambda
+q}\right) -\dfrac{ qZ_{\lambda }(x)}{q+\lambda }+1.
\end{equation*}
\end{proof}

The following main theorem presents the density of $\mathcal{O}^{X}_{\mathrm{%
e} _{\lambda }}$, which is derived from (\ref{LT}) using the Laplace
inversion technique.

\begin{theorem}
\label{maintheo} For $\lambda >0$, $x \in \mathbb{R} $ and $y \geq 0$, 
\begin{eqnarray}  \label{mainres}
\mathbb{P}_{x}\left( \mathcal{O}^{X}_{\mathrm{e}_{\lambda }}\in \mathrm{d}%
y\right) &=& \left( 1-\left( Z_{\lambda }(x)-\frac{\lambda }{\Phi_\lambda }%
W_{\lambda }\left( x\right) \right) \right)\delta _{0}\left( \mathrm{d}%
y\right)  \notag \\
&&+\lambda \mathrm{e}^{-\lambda y}\left( \mathcal{B}^{\left( \lambda \right)
}\left( x,y\right) -\lambda \int_{0}^{y}\mathcal{B}^{\left( \lambda \right)
}\left( x,s\right) \mathrm{d}s \right) \mathrm{d} y  \notag \\
&&+\lambda \mathrm{e}^{-\lambda y} \left( Z_{\lambda}(x)-\frac{\lambda }{%
\Phi_\lambda }W_{\lambda }\left( x\right)\right)\mathrm{d} y,
\end{eqnarray}
where 
\begin{eqnarray*}
\mathcal{B}^{\left(\lambda\right) }\left(x,s\right) =\dfrac{\Lambda^{\left(
\lambda \right) \prime }\left( x,s\right)}{\Phi_\lambda }- \Lambda^{\left(
\lambda \right)}\left( x,s\right),
\end{eqnarray*}
and $\delta_0(\cdot)$ is the Dirac mass at $0$. In particular, when $x=0$,
equation \eqref{mainres} reduces to 

\begin{eqnarray}  \label{mainres0}
\mathbb{P}\left( \mathcal{O}^{X}_{\mathrm{e}_{\lambda }}\in \mathrm{d}%
y\right) = \frac{\lambda }{\Phi_\lambda} \left( W_{\lambda }\left(
0\right)\delta _{0}\left( \mathrm{d}y\right) +\mathrm{e}^{-\lambda y}\Lambda
^{\prime }\left( 0,y\right)\mathrm{d} y\right) .
\end{eqnarray}
\end{theorem}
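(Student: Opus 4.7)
The plan is to invert the Laplace transform in (\ref{LT}) term by term with respect to $q$. The constant $1$ inverts to $\delta_0(\mathrm{d}y)$, and the middle term is dealt with at once by writing $-qZ_\lambda(x)/(q+\lambda)=-Z_\lambda(x)+\lambda Z_\lambda(x)/(q+\lambda)$, which inverts to $-Z_\lambda(x)\delta_0(\mathrm{d}y)+\lambda Z_\lambda(x)\mathrm{e}^{-\lambda y}\mathrm{d}y$. All the real work lies in the third summand
\[
f(q):=\lambda\,\frac{\Phi_{q+\lambda}-\Phi_\lambda}{(q+\lambda)\Phi_\lambda}\,Z_\lambda(x,\Phi_{q+\lambda}),
\]
whose $q$-dependence enters only implicitly through $\Phi_{q+\lambda}$; the goal is to rewrite $f$ as a standard Laplace transform in $q$ of an explicit function of $y$.

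The decisive observation is that $\psi_\lambda(\Phi_{q+\lambda})=q$, so formula (\ref{Zv2}) applied at $\theta=\Phi_{q+\lambda}$ gives $Z_\lambda(x,\Phi_{q+\lambda})=q\int_0^\infty \mathrm{e}^{-\Phi_{q+\lambda}y}W_\lambda(x+y)\mathrm{d}y$. An integration by parts in $y$ (the boundary term at infinity vanishes since $\Phi_{q+\lambda}>\Phi_\lambda$ while $W_\lambda(x+y)$ grows like $\mathrm{e}^{\Phi_\lambda(x+y)}$) then yields
\[
(\Phi_{q+\lambda}-\Phi_\lambda)\!\int_0^\infty \mathrm{e}^{-\Phi_{q+\lambda}y}W_\lambda(x+y)\mathrm{d}y = W_\lambda(x)+\!\int_0^\infty \mathrm{e}^{-\Phi_{q+\lambda}y}\bigl[W_\lambda'(x+y)-\Phi_\lambda W_\lambda(x+y)\bigr]\mathrm{d}y.
\]
Kendall's identity (\ref{eq:Kendall}) supplies $\mathrm{e}^{-\Phi_{q+\lambda}z}=\int_0^\infty \mathrm{e}^{-(q+\lambda)r}(z/r)\mathbb{P}(X_r\in\mathrm{d}z)\,\mathrm{d}r$; substituting this and applying Fubini, the two $y$-integrals above are recognised, via the definition (\ref{DS2}), as the $q$-Laplace transforms of $\mathrm{e}^{-\lambda r}\Lambda^{(\lambda)\prime}(x,r)$ and $\mathrm{e}^{-\lambda r}\Lambda^{(\lambda)}(x,r)$ respectively. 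Assembling these pieces, one arrives at
\[
f(q)=\frac{\lambda q\,W_\lambda(x)}{(q+\lambda)\Phi_\lambda}+\frac{\lambda q}{q+\lambda}\int_0^\infty \mathrm{e}^{-qr}\mathrm{e}^{-\lambda r}\mathcal{B}^{(\lambda)}(x,r)\,\mathrm{d}r,
\]
which is now a classical Laplace transform in $q$.

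Inversion of this last expression is then routine: $\lambda q/(q+\lambda)=\lambda-\lambda^2/(q+\lambda)$ corresponds to the signed measure $\lambda\delta_0(\mathrm{d}y)-\lambda^2\mathrm{e}^{-\lambda y}\mathrm{d}y$, so the first piece of $f(q)$ contributes $(\lambda/\Phi_\lambda)W_\lambda(x)\delta_0(\mathrm{d}y)-(\lambda^2/\Phi_\lambda)W_\lambda(x)\mathrm{e}^{-\lambda y}\mathrm{d}y$, while the convolution theorem applied to the second piece produces $\lambda\mathrm{e}^{-\lambda y}\bigl[\mathcal{B}^{(\lambda)}(x,y)-\lambda\int_0^y \mathcal{B}^{(\lambda)}(x,s)\mathrm{d}s\bigr]\mathrm{d}y$. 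Collecting all four inverted contributions and regrouping them into an atomic and an absolutely continuous part reproduces exactly (\ref{mainres}).

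The main obstacle in this scheme is the reduction of $f(q)$: the combination of (\ref{Zv2}) at $\theta=\Phi_{q+\lambda}$, the integration by parts in $y$, and Kendall's identity is the only nontrivial ingredient, and it is what turns the implicit $\Phi_{q+\lambda}$-dependence into a clean $q$-Laplace transform of the delayed scale functions $\Lambda^{(\lambda)}$ and $\Lambda^{(\lambda)\prime}$. Once (\ref{mainres}) is established, the reduction to (\ref{mainres0}) at $x=0$ is a short computation: (\ref{L11}) gives $\Lambda^{(\lambda)}(0,s)=\mathrm{e}^{\lambda s}$, and identity (\ref{lambdaiden}) applied at $x=0$ and $q=\lambda$ collapses the bracket $\mathcal{B}^{(\lambda)}(0,y)-\lambda\int_0^y \mathcal{B}^{(\lambda)}(0,s)\mathrm{d}s+Z_\lambda(0)-(\lambda/\Phi_\lambda)W_\lambda(0)$ to $\Lambda'(0,y)/\Phi_\lambda$, which yields the stated simplification.
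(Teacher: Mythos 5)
Your proposal is correct and follows essentially the same route as the paper: both invert the known Laplace transform \eqref{LT} in $q$ by using Kendall's identity \eqref{eq:Kendall} to recognise the $\Phi_{q+\lambda}$-dependent terms as $q$-Laplace transforms of $\mathrm{e}^{-\lambda r}\Lambda^{(\lambda)}(x,r)$ and $\mathrm{e}^{-\lambda r}\Lambda^{(\lambda)\prime}(x,r)$, then invert the remaining rational functions of $q$ elementarily. The only difference is organisational: you derive the key identities \eqref{id3}--\eqref{id1} explicitly from \eqref{Zv2} via integration by parts (and spell out the $x=0$ reduction via \eqref{L11} and \eqref{lambdaiden}), where the paper merely asserts them.
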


\begin{proof}
Given that 
\begin{equation*}
\mathbb{P}_{x}\left( \mathcal{O}_{\mathrm{e}_{\lambda }}^{X}=0\right) =%
\mathbb{P}_{x}\left( \tau _{0}^{-}>\mathrm{e}_{\lambda }\right)
=1-Z_{\lambda }\left( x\right) +\frac{\lambda }{\Phi _{\lambda }}W_{\lambda
}\left( x\right) \text{,} 
\end{equation*}%
which is $0$ if $x<0$, we first rewrite \eqref{LT} as 
\begin{eqnarray*}
\mathbb{E}_{x}\left[\mathrm{e}^{-q\mathcal{O}_{\mathrm{e}_{\lambda }}^{X}}%
\right] &=&\left\{ 1-Z_{\lambda }\left( x\right) +\frac{\lambda }{\Phi
_{\lambda }}W_{\lambda }\left( x\right) \right\} +\frac{\lambda }{\lambda +q}%
\frac{\Phi _{\lambda +q}-\Phi _{\lambda }}{\Phi _{\lambda }}Z_{\lambda
}\left( x,\Phi _{\lambda +q}\right) \\
&&+\frac{\lambda }{\lambda +q}Z_{\lambda }\left( x\right) -\frac{\lambda }{%
\Phi _{\lambda }}W_{\lambda }\left( x\right) .
\end{eqnarray*}
By simple manipulations, the above expression can also be rewritten as 
\begin{eqnarray}  \label{lapinv}
\mathbb{E}_{x}\left[ \mathrm{e}^{-q\mathcal{O}_{\mathrm{e}_{\lambda }}^{X}}%
\right] &=& 1-Z_{\lambda }\left( x\right) +\frac{\lambda }{\Phi _{\lambda }}%
W_{\lambda }\left( x\right)  \notag \\
&&+\frac{\lambda }{\lambda +q}\left( \frac{\Phi _{\lambda +q}Z_{\lambda
}\left( x,\Phi _{\lambda +q}\right) -qW_{\lambda }\left( x\right) }{\Phi
_{\lambda }}-Z_{\lambda }\left( x,\Phi _{\lambda +q}\right) +Z_{\lambda
}\left( x\right) -\frac{\lambda }{\Phi _{\lambda }}W_{\lambda }\left(
x\right) \right)  \notag \\
&=& 1-Z_{\lambda }\left( x\right) +\frac{\lambda }{\Phi _{\lambda }}%
W_{\lambda }\left( x\right) +\frac{\lambda }{\lambda +q}\left( Z_{\lambda
}\left( x\right) -\frac{\lambda }{\Phi _{\lambda }}W_{\lambda }\left(
x\right) \right)  \notag \\
&&+\lambda \left( 1-\frac{\lambda }{\lambda +q}\right) \left( \frac{\Phi
_{\lambda +q}Z_{\lambda }\left( x,\Phi _{\lambda +q}\right) -qW_{\lambda
}\left( x\right) }{q\Phi _{\lambda }}-\frac{Z_{\lambda }\left( x,\Phi
_{\lambda +q}\right) }{q}\right) \text{.}
\end{eqnarray}
Using the following identities 
\begin{equation}  \label{id3}
\frac{Z_{\lambda }\left( x,\Phi_{\lambda +q}\right) }{q}=\int_{0}^{\infty }%
\mathrm{e}^{-qy}\left( \mathrm{e}^{-\lambda y}\Lambda ^{\left( \lambda
\right) }\left( x,y\right) \right) \mathrm{d}y,
\end{equation}
and 
\begin{equation}  \label{id1}
\frac{\Phi _{\lambda +q}Z_{\lambda }\left( x,\Phi _{\lambda
+q}\right)-qW_{\lambda }\left( x\right) }{q}=\int_{0}^{\infty }\mathrm{e}%
^{-qy}\left( \mathrm{e}^{-\lambda y}\Lambda ^{\left( \lambda \right) \prime
}\left( x,y\right) \right) \text{d}y,
\end{equation}
which can be proved using Kendall's identity \eqref{eq:Kendall} and
Tonelli's Theorem, leads to 
\begin{eqnarray*}
\mathbb{E}_{x}\left[ \mathrm{e}^{-q\mathcal{O}_{\mathrm{e}_{\lambda }}^{X}}%
\right] &=&\left\{ 1-Z_{\lambda }\left( x\right) +\frac{\lambda }{\Phi
_{\lambda }}W_{\lambda }\left( x\right) \right\} +\frac{\lambda }{\lambda +q}%
\left( Z_{\lambda }\left( x\right) -\frac{\lambda }{\Phi _{\lambda }}%
W_{\lambda }\left( x\right) \right) \\
&&+\lambda \left( 1-\frac{\lambda }{\lambda +q}\right) \int_{0}^{\infty }%
\mathrm{e}^{-qy}\left\{ \mathrm{e}^{-\lambda y}\mathcal{B}^{\left( \lambda
\right) }\left( x,y\right) \right\} \text{d}y\text{.}
\end{eqnarray*}%
Hence, by Laplace inversion, we obtain 
\begin{eqnarray*}
\mathbb{P}_{x}\left( \mathcal{O}_{\mathrm{e}_{\lambda }}^{X}\in \text{d}%
y\right) &=&\left( 1-Z_{\lambda }\left( x\right) +\frac{\lambda }{\Phi
_{\lambda }}W_{\lambda }\left( x\right) \right) \delta _{0}\left( \text{d}%
y\right) \\
&&+\lambda \mathrm{e}^{-\lambda y}\left( Z_{\lambda }\left( x\right) -\frac{%
\lambda }{\Phi _{\lambda }}W_{\lambda }\left( x\right) \right) \text{d}y \\
&&+\lambda\mathrm{e}^{-\lambda y} \left(\mathcal{B}^{\left( \lambda \right)
}\left( x,y\right)-\lambda \int_{0}^{y}\mathcal{B}^{\left( \lambda \right)
}\left(x,s\right) \text{d}s \right)\text{d}y.
\end{eqnarray*}
Equation \eqref{mainres} reduces to equation  \eqref{mainres0} when $x=0$.
This ends the proof.
\end{proof}

\bigskip

Note that the expression of $\mathbb{P}_{x}\left( \mathcal{O}_{\mathrm{e}%
_{\lambda }}^{X}\in \text{d}y\right)$ in \eqref{mainres} only relies on the
scale function $W_{\lambda}(x)$ and the law $\mathbb{P}\left( X_{r}\in 
\mathrm{d}z\right)$. Hence, one can obtain a
closed-form expression for the density of $\mathcal{O}_{\mathrm{e}_{\lambda
}}^{X}$ as long as $W_{\lambda}(x)$ and $\mathbb{P}\left( X_{r}\in \mathrm{d}%
z\right)$ are explicit. In Section \ref{sectionexamp}, we will provide a few
examples of the underlying process $X$ such that these two quantities are
explicit.

\begin{remark}
To better understand the formula \eqref{mainres}, we can rewrite it as
\begin{equation*}
\mathbb{P}_{x}\left( \mathcal{O}^{X}_{\mathrm{e}_{\lambda }}\in \mathrm{d}%
y\right) = \mathbb{P}_{x}\left( \tau _{0}^{-}>\mathrm{e}_{\lambda }\right)
\delta _{0}\left( \mathrm{d}y\right) +\mathbb{P}_{x}\left( \mathcal{O}^{X}_{%
\mathrm{e} _{\lambda }}\in \mathrm{d}y,\tau _{0}^{-}<\mathrm{e}_{\lambda
}\right) ,
\end{equation*}
where  
\begin{equation*}
\mathbb{P}_{x}\left( \tau _{0}^{-}>\mathrm{e}_{\lambda }\right) =1-\left(
Z_{\lambda }(x)-\frac{\lambda }{\Phi _{\lambda }}W_{\lambda }\left( x\right)
\right) ,
\end{equation*}
and  
\begin{eqnarray*}
\mathbb{P}_{x}\left( \mathcal{O}^{X}_{\mathrm{e}_{\lambda }}\in \mathrm{d}%
y,\tau _{0}^{-}<\mathrm{e}_{\lambda }\right) &=&\lambda \mathrm{e}^{-\lambda
y}\left( Z_{\lambda }\left( x\right) -\frac{\lambda }{\Phi _{\lambda }}%
W_{\lambda }\left( x\right) \right) \text{d}y \\
&&+\lambda \mathrm{e}^{-\lambda y}\left( \mathcal{B}^{\left( \lambda \right)
}\left( x,y\right) -\lambda \int_{0}^{y}\mathcal{B}^{\left( \lambda \right)
}\left( x,s\right) \mathrm{d}s\right) \mathrm{d} y.
\end{eqnarray*}
\end{remark}

\begin{remark}
The time horizon in Theorem \eqref{maintheo} can be easily extended to a 
hypoexponential distribution (also called generalized Erlang distribution). The class of hypoexponential distributions is known to be dense within the class
of continuous nonnegative distributions in terms of weak convergence; see, 
e.g., Botta and Harris \cite{botta1986approximation}. Let $\tilde{\me}_{n}$ be
a hypoexponential-distributed time horizon given by $\tilde{\me}_{n}=\mathrm{e%
} _{1}+\mathrm{e}_{2}+\cdots +\mathrm{e}_{n}$, where $\mathrm{e}_{1},\mathrm{%
e} _{2},\ldots ,\mathrm{e}_{n}$ are mutually independent and exponentially 
distributed with distinct means $1/\lambda _{1},1/\lambda _{2},\ldots ,1/\lambda _{n}$
, respectively. Since the density of $\tilde{e}_{n}$ is given by 
\begin{equation*}
\mathbb{P}\left( \tilde{\me}_{n}\in \mathrm{d}t\right)
=\sum_{k=1}^{n}a_{k}^{n}\lambda _{k}\mathrm{e}^{-\lambda _{k}t}\mathrm{d}t,
\end{equation*}%
where $\lambda _{k}>0$ and $a_{k}^{n}=\prod\limits_{j=1,j\neq k}^{n}\frac{\lambda _{j}}{\lambda_{j}-\lambda _{k}}$ with $\sum_{k=1}^{n}a_{k}^{n}=1$, it follows that
\begin{eqnarray*}
\mathbb{P}_{x}\left( \mathcal{O}^{X}_{\tilde{\mathrm{e}}_{n}}\in \mathrm{d}%
y\right) &=&\int_{0}^{\infty }\sum_{k=1}^{n}a_{k}^{n}\lambda _{k}\mathrm{e}%
^{-\lambda _{k}t}\mathbb{P}_{x}\left( \mathcal{O}^{X}_{t}\in \mathrm{d}%
y\right) \mathrm{d}t \\
&=&\sum_{k=1}^{n}a_{k}^{n}\int_{0}^{\infty }\lambda _{k}\mathrm{e}^{-\lambda_{k}t}\mathbb{P}_{x}\left( \mathcal{O}^{X}_{t}\in \mathrm{d}y\right) \mathrm{%
d}t \\
&=&\sum_{k=1}^{n}a_{k}^{n}\mathbb{P}_{x}\left( \mathcal{O}^{X}_{\mathrm{e}%
_{\lambda _{k}}}\in \mathrm{d}y\right) .
\end{eqnarray*}
For instance, by setting 
\begin{equation*}
\lambda _{k}=\frac{n(n+1)}{2k\mathbb{E}\left[ \tilde{\me}_{n}\right] },
\end{equation*}
the variance of $\tilde{\me}_{n}$ became $\dfrac{2n(2n+1)}{3n(n+1)}k\left( 
\mathbb{E}\left[ \tilde{\me}_{n}\right] \right) ^{2}$ which converges to $0$
when $n\rightarrow \infty $. Thus, it is possible to use the Erlangization
method to approximate the fixed time horizon case (see Klugman et al. \cite%
{klugmanetal2012}).
\end{remark}

Letting $\lambda\rightarrow0$ in \eqref{mainres}, we obtain the following
expression for the distribution of $\mathcal{O}^{X}_{\infty}$, the
occupation time $X$ stays below level $0$ up to infinity.

\begin{corol}
\label{corol} For $x \in \mathbb{R} $, $y \geq 0$ and $\mathbb{E}[X_{1}]>0$, 
\begin{equation}  \label{corolinf}
\mathbb{P}_{x}\left( \mathcal{O}^{X}_{\infty} \in \mathrm{d} y \right)= 
\mathbb{E}[X_{1}] \left( W(x)\delta_0(\mathrm{d} y)+ \Lambda^{\prime}(x,y)%
\mathrm{d} y \right).
\end{equation}
\end{corol}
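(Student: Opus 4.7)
My plan is to take $\lambda \to 0^{+}$ in the density formula \eqref{mainres} of Theorem \ref{maintheo} and identify each resulting term. The probabilistic justification is that $\mathrm{e}_{\lambda} \uparrow \infty$ almost surely as $\lambda \to 0^{+}$, whence $\mathcal{O}^{X}_{\mathrm{e}_{\lambda}} \uparrow \mathcal{O}^{X}_{\infty}$ by monotone convergence and the laws converge weakly. The hypothesis $\mathbb{E}[X_{1}] > 0$ ensures that $X$ drifts to $+\infty$, so $\mathcal{O}^{X}_{\infty}$ is a proper random variable.

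The key asymptotics I would deploy are the pointwise convergences $W_{\lambda}(x) \to W(x)$, $Z_{\lambda}(x) \to 1$ for $x \geq 0$ (and $Z_{\lambda}(x) = 1$ for $x < 0$), $\Lambda^{(\lambda)}(x,y) \to \Lambda(x,y)$, $\Lambda^{(\lambda) \prime}(x,y) \to \Lambda'(x,y)$, together with the crucial ratio $\lambda/\Phi_{\lambda} \to \mathbb{E}[X_{1}]$ from \eqref{limesp}. For the atom at $0$, the coefficient $1 - Z_{\lambda}(x) + (\lambda/\Phi_{\lambda})W_{\lambda}(x)$ tends to $\mathbb{E}[X_{1}] W(x)$ in both cases $x \geq 0$ and $x < 0$ (using $W(x) = 0$ in the latter). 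The pure-exponential density term $\lambda \mathrm{e}^{-\lambda y}(Z_{\lambda}(x) - (\lambda/\Phi_{\lambda})W_{\lambda}(x))$ vanishes because of its leading factor $\lambda$. For the $\mathcal{B}^{(\lambda)}$ contribution I would regroup as
\[
\mathrm{e}^{-\lambda y}\Bigl(\tfrac{\lambda}{\Phi_{\lambda}}\Lambda^{(\lambda) \prime}(x,y) - \lambda \Lambda^{(\lambda)}(x,y)\Bigr) - \lambda \mathrm{e}^{-\lambda y}\Bigl(\tfrac{\lambda}{\Phi_{\lambda}} \int_{0}^{y}\Lambda^{(\lambda) \prime}(x,s)\mathrm{d}s - \lambda \int_{0}^{y} \Lambda^{(\lambda)}(x,s)\mathrm{d}s\Bigr),
\]
so that the first bracket converges to $\mathbb{E}[X_{1}] \Lambda'(x,y)$ while the second bracket vanishes due to its prefactor $\lambda$.

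The main obstacle is justifying that these pointwise limits of densities/coefficients actually identify the weak limit of the probability measures $\mathbb{P}_{x}(\mathcal{O}^{X}_{\mathrm{e}_{\lambda}} \in \mathrm{d}y)$, and that the integrals against $\Lambda^{(\lambda)}$ and $\Lambda^{(\lambda) \prime}$ stay controlled uniformly in $\lambda$ on any compact $y$-window. I would address this either by dominated convergence against the representation \eqref{DS2} and Kendall's identity \eqref{eq:Kendall}, or more cleanly by verifying that the candidate limit measure in \eqref{corolinf} already has total mass $1$ (using $\int_{0}^{\infty} \Lambda'(x,y)\mathrm{d}y$ evaluated through \eqref{DS2} and Kendall's identity), which together with weak convergence and the matching atom forces the identification.

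A cleaner alternative, which I would present as a backup, is to bypass \eqref{mainres} entirely and pass to the limit directly in the Laplace transform \eqref{LT}, obtaining $\mathbb{E}_{x}[\mathrm{e}^{-q\mathcal{O}^{X}_{\infty}}] = \mathbb{E}[X_{1}] \Phi_{q} Z(x,\Phi_{q})/q$, and then inverting this via identities \eqref{id3}--\eqref{id1} specialized to $\lambda = 0$ to read off the atom $\mathbb{E}[X_{1}] W(x)$ and the density $\mathbb{E}[X_{1}] \Lambda'(x,y)$. This route sidesteps the measure-convergence subtlety of the direct limiting argument at the cost of re-inverting one Laplace transform.
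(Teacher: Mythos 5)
Your proposal is correct and follows essentially the same route as the paper, which obtains Corollary \ref{corol} precisely by letting $\lambda\to 0$ in \eqref{mainres} and using $\lambda/\Phi_\lambda\to\mathbb{E}[X_1]$ from \eqref{limesp}; your term-by-term verification (atom $\to \mathbb{E}[X_1]W(x)$, the $\lambda$-prefactored terms vanishing, $\lambda\mathcal{B}^{(\lambda)}(x,y)\to\mathbb{E}[X_1]\Lambda'(x,y)$) supplies the details the paper leaves implicit. The backup route via the Laplace transform \eqref{LT} is a reasonable alternative but is not needed.
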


Again we can rewrite (\ref{corolinf}) as 
\begin{equation*}
\mathbb{P}_{x}\left( \mathcal{O}^{X}_{\infty }\in \mathrm{d} y\right) =%
\mathbb{P}_{x}\left( \tau _{0}^{-}=\infty \right) \delta _{0}\left( \mathrm{d%
}y\right) +\mathbb{P}_{x}\left( \mathcal{O}^{X}_{\infty }\in \mathrm{d}%
y,\tau _{0}^{-}<\infty \right) ,
\end{equation*}%
where 
\begin{equation*}
\mathbb{P}_{x}\left( \tau _{0}^{-}=\infty \right) =\mathbb{E}[X_{1}]W(x),
\end{equation*}%
and 
\begin{equation*}
\mathbb{P}_{x}\left( \mathcal{O}^{X}_{\infty }\in \mathrm{d}y,\tau
_{0}^{-}<\infty \right) =\mathbb{E}[X_{1}]\Lambda ^{\prime }(x,y)\mathrm{d}
y.
\end{equation*}

\subsection{Applications}

\label{section2} This section is devoted to introduce a few applications of
Theorem \ref{maintheo}.

\subsubsection{Future drawdown}
Drawdown is used as a dynamic risk metric to measure the magnitude of the 
decline of insurance surplus from its maximum. Interested readers are
referred to Zhang \cite{hongzhong2018stochastic} for more theoretical
results and applications of drawdown in insurance and finance. Recently,
Baurdoux et al. \cite{baurdouxetal2017} introduced the future drawdown
extreme defined as  
\begin{equation*}
\bar{D}_{s,t}=\sup_{0\leq u\leq s}\inf_{u\leq w\leq t+s}\left(
X_{w}-X_{u}\right) ,
\end{equation*}
where $s,t>0$. The infinite-horizon version is denoted by  
\begin{equation*}
\bar{D}_{s}=\lim_{t\rightarrow \infty }\bar{D}_{s,t}=\sup_{0\leq u\leq
s}\inf_{w\geq u}\left( X_{w}-X_{u}\right) .
\end{equation*}
From Corollary 5.2 (ii) of Baurdoux et al. \cite{baurdouxetal2017}, we have  
\begin{equation}
\mathbb{P}\left( -\bar{D}_{\mathrm{e}_{q}}<x\right) =\mathbb{E}[X_{1}]\frac{%
\Phi _{q}}{q}Z\left( x,\Phi _{q}\right)=\mathbb{E}_{x}\left[ \mathrm{e}^{-q%
\mathcal{O}^{X} _{\infty }}\right],  \label{drd}
\end{equation}
where the last equality is due to Corollary 1 of Landriault et al. \cite%
{landriaultetal2011}. By \eqref{corolinf}, we conclude that  
\begin{equation*}
\mathbb{P}\left( -\bar{D}_{s}<x\right) =\mathbb{P}_{x}\left( \mathcal{O}%
^{X}_{\infty }<s\right) = \mathbb{E}[X_{1}]\left( W(x)+\int_{0}^{s}\Lambda
^{\prime }(x,y)\mathrm{d} y\right) .
\end{equation*}
\subsubsection{Inverse occupation time}
The occupation time $\mathcal{O}^{X}_{t}$ certainly consists of some
information on how long a surplus process may stay in the red zone up to
time $t$. But it fails to provide a solvency early warning mechanism (in the
form of a stopping time or others) that the insurer can act on in periods of
financial distress. This motivates us to consider the \emph{inverse
occupation time}, that is the first time the accumulated duration of all
periods of financial distress (periods in which the risk process is below
the solvency threshold level) exceeds a deterministic tolerance level.
Specifically, the inverse occupation time with parameter $r>0$ is defined as 
\begin{equation*}
\sigma _{r}=\inf \left\{ t>0\colon \mathcal{O}^{X}_{t}>r\right\} .
\end{equation*}%
The stopping time $\sigma _{r}$ is deemed to occur at the first time the
process $X$ cumulatively stays below level $0$ in excess of $r$. Here, the
parameter $r$ can represent the insurer's tolerance level for the surplus
process to cumulatively stay below threshold $0$. Note that, in actuarial
ruin terminology, the inverse occupation time is also known as the \textit{\
cumulative Parisian ruin} time (see Gu\'{e}rin and Renaud \cite%
{guerinrenaud2015}).

The finite-time probability of inverse occupation time is given by 
\begin{equation}
\mathbb{P}_{x}\left( \sigma _{r}\leq t\right) =\mathbb{P}_{x}\left( \mathcal{%
O}^{X}_{t}>r\right) ,  \label{cumu2}
\end{equation}%
while in the infinite-time horizon case 
\begin{equation}
\mathbb{P}_{x}\left( \sigma _{r}<\infty \right) =\mathbb{P}_{x}\left( 
\mathcal{O}^{X}_{\infty }>r\right) .  \label{cumu1}
\end{equation}%
The Laplace transform of $\sigma _{r}$ is given by 
\begin{equation}
\mathbb{E}_{x}\left[ \mathrm{e}^{-\lambda \sigma _{r}}\right] =\mathbb{P}%
_{x}\left( \sigma _{r}<\mathrm{e}_{\lambda }\right) =\mathbb{P}_{x}\left( 
\mathcal{O}^{X}_{\mathrm{e}_{\lambda }}>r\right) ,  \label{cumu}
\end{equation}%
which can be readily obtained from Theorem \ref{maintheo} as below.

\begin{theorem}
\label{maintheoParisian} For $r,\lambda >0$ and $x \in \mathbb{R} $,  
\begin{eqnarray}  \label{Lapcumu}
\mathbb{E}_x\left[ \mathrm{e}^{-\lambda \sigma _{r}}\right] &=&\mathrm{e}
^{-\lambda r}\left( Z_{\lambda }(x)-\frac{\lambda }{\Phi_\lambda} W_{\lambda
}\left( x\right) \right)  \notag \\
&&-\lambda\int_{0}^{r}\mathrm{e}^{-\lambda u}\left( \mathcal{B}^{\left(
\lambda \right) }\left( x,u\right) -\lambda \int_{0}^{u} \mathcal{B}^{\left(
\lambda \right) }\left( x,s\right) \mathrm{d}s\right) \mathrm{d}u.
\end{eqnarray}
\end{theorem}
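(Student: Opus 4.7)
The plan is to derive the Laplace transform of $\sigma_r$ as a direct corollary of Theorem \ref{maintheo} by exploiting the identity (\ref{cumu}), which reads $\mathbb{E}_x[\me^{-\lambda \sigma_r}] = \mathbb{P}_x(\Occ_{\me_\lambda} > r)$. Since $r > 0$, the Dirac mass at $0$ in \eqref{mainres} does not contribute to $\mathbb{P}_x(\Occ_{\me_\lambda} > r)$, so the task reduces to integrating the absolutely continuous part of the density on $(r,\infty)$, or equivalently writing it as $1$ minus the integral on $[0,r]$.

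The cleanest route is to compute $1 - \mathbb{P}_x(\Occ_{\me_\lambda} \leq r)$. Splitting the integration of \eqref{mainres} over $[0,r]$, the atom at $0$ contributes $1 - Z_\lambda(x) + \frac{\lambda}{\Phi_\lambda} W_\lambda(x)$, the constant-in-$y$ absolutely continuous term $\lambda \me^{-\lambda y}\left(Z_\lambda(x) - \frac{\lambda}{\Phi_\lambda} W_\lambda(x)\right)$ integrates to $\left(Z_\lambda(x) - \frac{\lambda}{\Phi_\lambda} W_\lambda(x)\right)(1 - \me^{-\lambda r})$, and the $\mathcal{B}^{(\lambda)}$-dependent term integrates to $\int_0^r \lambda \me^{-\lambda u}\bigl(\mathcal{B}^{(\lambda)}(x,u) - \lambda \int_0^u \mathcal{B}^{(\lambda)}(x,s)\md s\bigr)\md u$. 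Adding these three pieces and subtracting from $1$ causes the first two contributions to telescope: the constant cancel the mass at $0$ down to $\left(Z_\lambda(x) - \frac{\lambda}{\Phi_\lambda} W_\lambda(x)\right)\me^{-\lambda r}$, and the remaining integral appears with an overall minus sign, yielding exactly \eqref{Lapcumu}.

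Since each step is just a matter of collecting terms from the density already established in Theorem \ref{maintheo}, there is no real obstacle; the proof is essentially one line of elementary calculation once \eqref{cumu} is invoked. If any subtlety arises, it would be confirming that the Dirac part does not contribute for $r > 0$ (trivial) and ensuring the integrals converge, which follows because $\mathbb{P}_x(\Occ_{\me_\lambda} \in \md y)$ is a bona fide sub-probability measure with total mass one, hence the displayed quantities are finite for every $\lambda, r > 0$ and $x \in \mathbb{R}$.
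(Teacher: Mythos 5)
Your proposal is correct and is exactly the route the paper intends: it states that \eqref{Lapcumu} is ``readily obtained from Theorem \ref{maintheo}'' via the identity $\mathbb{E}_x[\mathrm{e}^{-\lambda\sigma_r}]=\mathbb{P}_x(\mathcal{O}^X_{\mathrm{e}_\lambda}>r)$, and your term-by-term integration of the density over $[0,r]$ followed by subtraction from $1$ reproduces the stated formula. No gaps.
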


Using \eqref{corolinf}, we obtain the following expression for the
probability that the inverse occupation time ever occurs.


\begin{corol}
\label{LaplaceTr}  For $r>0$, $x \in \mathbb{R} $ and $\mathbb{E}[X_1]>0$,  
\begin{eqnarray}  \label{cumuParruin}
\mathbb{P}_x\left( \sigma _{r} < \infty \right) &=&1-\mathbb{E}[X_1]\left(
W(x)+\int_{0}^{r} \Lambda^{ \prime }\left( x,s\right)\mathrm{d} s\right).
\end{eqnarray}
\end{corol}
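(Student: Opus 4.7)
The plan is to derive this identity as a direct consequence of Corollary \ref{corol} together with the definition of the inverse occupation time and the relation \eqref{cumu1}. The starting point is the observation already recorded in \eqref{cumu1}, namely
\begin{equation*}
\mathbb{P}_{x}\left(\sigma_{r}<\infty\right)=\mathbb{P}_{x}\left(\mathcal{O}^{X}_{\infty}>r\right)=1-\mathbb{P}_{x}\left(\mathcal{O}^{X}_{\infty}\leq r\right),
\end{equation*}
so the task reduces to integrating the distribution of $\mathcal{O}^{X}_{\infty}$ over $[0,r]$.

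Next, I would substitute the expression from Corollary \ref{corol}, which gives the decomposition
\begin{equation*}
\mathbb{P}_{x}\left(\mathcal{O}^{X}_{\infty}\in\mathrm{d}y\right)=\mathbb{E}[X_{1}]\left(W(x)\,\delta_{0}(\mathrm{d}y)+\Lambda^{\prime}(x,y)\,\mathrm{d}y\right).
\end{equation*}
Integrating over $y\in[0,r]$ (noting that, since $r>0$, the Dirac mass at $0$ is fully captured) yields
\begin{equation*}
\mathbb{P}_{x}\left(\mathcal{O}^{X}_{\infty}\leq r\right)=\mathbb{E}[X_{1}]\left(W(x)+\int_{0}^{r}\Lambda^{\prime}(x,s)\,\mathrm{d}s\right),
\end{equation*}
and subtracting from $1$ delivers \eqref{cumuParruin}.

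There is essentially no obstacle: the hypothesis $\mathbb{E}[X_{1}]>0$ is exactly what ensures that Corollary \ref{corol} applies (and that $\mathbb{P}_{x}(\mathcal{O}^{X}_{\infty}<\infty)=1$, so that the distribution function is a genuine probability, with the Dirac atom at $0$ corresponding to the event $\{\tau_{0}^{-}=\infty\}$). The only minor point to mention is that, because $\mathcal{O}^{X}_{\infty}$ has a density on $(0,\infty)$ under $\mathbb{P}_{x}$, one can equivalently write $\mathbb{P}_{x}(\mathcal{O}^{X}_{\infty}>r)$ as $1-\mathbb{P}_{x}(\mathcal{O}^{X}_{\infty}\leq r)$ without ambiguity at the endpoint $r$.
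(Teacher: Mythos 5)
Your proposal is correct and follows exactly the route the paper intends: the paper derives \eqref{cumuParruin} by combining the identity $\mathbb{P}_x(\sigma_r<\infty)=\mathbb{P}_x(\mathcal{O}^X_\infty>r)$ from \eqref{cumu1} with the distribution of $\mathcal{O}^X_\infty$ in Corollary \ref{corol}, which is precisely your integration argument. Your remarks on the role of the hypothesis $\mathbb{E}[X_1]>0$ and on the endpoint $r$ are accurate and only make explicit what the paper leaves implicit.
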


Finally, we point out that \eqref{cumuParruin} reduces to $\mathbb{P}\left(
\tau _{0}^{-}<\infty \right) =1-\mathbb{E}[X_{1}]W\left( x\right) $ when $%
r\rightarrow 0$ and using also the fact that the stopping time $\sigma _{r}$
converges $\mathbb{P}_{x}-$a.s. to the time of classical ruin $\tau _{0}^{-}$
(see Proposition $3.3$ in \cite{guerinrenaud2015}).

\subsubsection{Parisian ruin with exponential delay}

Another type of ruin in actuarial science with strong ties to the
distribution of $\mathcal{O}^{X}_{t}$ is the time of Parisian ruin with
exponential delays $\kappa ^{q}$ defined in \eqref{kappaq}. An expression
for the probability of Parisian ruin with exponential delays was first given
in \cite{landriaultetal2011} through the relation between the occupation
time $\mathcal{O}^{X}_{\infty }$ and $\kappa ^{q}$, that is, for $\mathbb{E}
[X_{1}]>0$, $q>0$ and $x\in \mathbb{R}$, 
\begin{equation}
\mathbb{P}_{x}\left( \kappa ^{q}<\infty \right) =1-\mathbb{E}_{x}\left[ 
\mathrm{e}^{-q\mathcal{O}^{X}_{\infty }}\right] =1-\mathbb{E}\left[ X_{1}%
\right] \frac{\Phi _{q}}{q}Z\left( x,\Phi _{q}\right) .  \label{Pruine1}
\end{equation}

We can readily recover (\ref{Pruine1}) using our result. Given that from
Proposition $3.4$ in \cite{guerinrenaud2015}, it is known that $\kappa ^{q}$
and $\sigma _{\mathrm{e}_{q}}$ have the same distribution. Replacing the
delay $r$ by an exponential random time $\mathrm{e}_{q}$ in %
\eqref{cumuParruin}, 
\begin{eqnarray*}
\mathbb{P}_{x}\left( \kappa ^{q}<\infty \right) &=&\mathbb{P}_{x}\left(
\sigma _{\mathrm{e}_{q}}<\infty \right) \\
&=&1-\mathbb{E}[X_{1}]\int_{0}^{\infty }q\mathrm{e}^{-qr}\left(
W(x)+\int_{0}^{r}\Lambda ^{\prime }\left( x,s\right) \mathrm{d}s\right) 
\mathrm{d}r \\
&=&1-\mathbb{E}[X_{1}]\left( W(x)+\int_{0}^{\infty }\mathrm{e}^{-qs}\Lambda
^{\prime }\left( x,s\right) \mathrm{d}s\right) \\
&=&1-\mathbb{E}[X_{1}]\dfrac{\Phi _{q}}{q}Z\left( x,\Phi _{q}\right) ,
\end{eqnarray*}%
where the last equality follows from identity \eqref{id1}.

\subsubsection{Last time at running maximum}
Denote the last time $X$ was at its peak by 
\begin{equation*}
G_{t}=\sup \left\{ s\leq t:X_{s}=\bar{X}_{s}\right\} ,
\end{equation*}%
where $\bar{X}_{t}=\sup_{s\leq t}X_{s}$ is the running maximum of $X$. The
quantity $t-G_t$ is so-called the duration of drawdown at time $t$ (see
Landriault et al. \cite{landriault2017magnitude}). We also denote the
occupation time of $X$ in the positive half-line by 
\begin{equation*}
\widehat{\mathcal{O}}_{t}^{X}=\int_{0}^{t}\mathbf{1}_{[0,\infty )}\left(
X_{s}\right) \mathrm{d}s.
\end{equation*}%
Since $X_0=0$, by the Sparre Andersen's identity (see Lemma VI.15 of Bertoin 
\cite{bertoin1996}), we know that, for every $t>0$, 
\begin{equation*}
\widehat{\mathcal{O}}_{t}^{X}\overset{\text{law}}{=}G_{t}.
\end{equation*}%
Then we have 
\begin{eqnarray*}
\mathbb{P}\left( \mathrm{e}_{\lambda }-G_{\mathrm{e}_{\lambda }}\in \mathrm{%
d }y\right) &=&\mathbb{P}\left( \mathrm{e}_{\lambda }-\widehat{\mathcal{O}}%
_{ \mathrm{e}_{\lambda }}^{X}\in \mathrm{d}y\right) \\
&=&\mathbb{P}\left( \mathcal{O}_{\mathrm{e}_{\lambda }}^{X}\in \mathrm{d}
y\right) \\
&=&\frac{\lambda }{\Phi _{\lambda }}\left( W_{\lambda }\left( 0\right)
\delta _{0}\left( \mathrm{d}y\right) +\mathrm{e}^{-\lambda y}\Lambda
^{\prime }\left( 0,y\right) \mathrm{d}y\right) ,
\end{eqnarray*}%
where the second last equality is due to \eqref{mainres0}.
\subsection{Examples\label{sectionexamp}}

This subsection is devoted to provide some examples of the spectrally
negative L\'{e}vy process $X$ for the main results in Theorem \ref{maintheo}%
, i.e., the law of $\mathcal{O}^{X}_{\mathrm{e}_{\lambda }}$. For cases of
Brownian risk process and Cram\'{e}r-Lundberg process with exponential
claims, we will obtain the law of $\mathcal{O}^{X}_{t}$ by a further
inversion. We assume $X_{0}=0$ in the following examples for simplicity.

\subsubsection{Brownian risk process}

Let $X_{t}=\mu t+\sigma B_{t},$ where $\mu >0$, $\sigma >0$, and $%
\{B_{t}\}_{t\geq 0}$ is a standard Brownian motion. For this process, the
scale function and the right-inverse of the Laplace exponent are given by 
\begin{equation*}
W(x)=\frac{1}{\mu }\left( 1-\mathrm{e}^{-2\mu x/\sigma ^{2}}\right) ,\quad
x\geq 0,
\end{equation*}%
and 
\begin{equation*}
\Phi _{\lambda }=\left( \sqrt{\mu ^{2}+2\lambda \sigma ^{2}}-\mu \right)
\sigma ^{-2},\quad \lambda >0,
\end{equation*}%
respectively. Also, since $X_{s}$ has a normal distribution with mean $\mu s$
and variance $s\sigma ^{2}$, 
\begin{equation*}
\Lambda \left( x,s\right) =\left( \frac{\sigma \mathrm{e}^{-\frac{\mu ^{2}s}{
2\sigma ^{2}}}}{\mu \sqrt{r2\pi }}+\mathcal{N}\left( \frac{\mu \sqrt{s}}{
\sigma }\right) \right) \left( 1-\mathrm{e}^{-\frac{2\mu }{\sigma ^{2}}
x}\right) +\mathrm{e}^{-\frac{2\mu }{\sigma ^{2}}x},
\end{equation*}%
and consequently, 
\begin{equation}
\Lambda ^{\prime }\left( x,s\right) =\frac{2}{\sigma ^{2}}\mathrm{e}^{-\frac{
2\mu }{\sigma ^{2}}x}\left( \frac{\sigma \mathrm{e}^{-\frac{\mu ^{2}s}{
2\sigma ^{2}}}}{\sqrt{r2\pi }}-\mu \mathcal{\bar{N}}\left( \frac{\mu \sqrt{s}%
}{\sigma }\right) \right) ,  \label{star}
\end{equation}%
where $\mathcal{N}=1-\bar{\mathcal{N}}$ is the cumulative distribution
function of the standard normal distribution. One can easily check that 
\begin{equation}
\frac{\mathrm{e}^{-\lambda s}}{\Phi _{\lambda }}=\int_{0}^{\infty }\mathrm{e}%
^{-\lambda t}\left( \mu +\frac{\sigma \mathrm{e}^{-(\mu ^{2}/2\sigma
^{2})\left( t-s\right) }}{\sqrt{2\pi \left( t-s\right) }}-\mu \bar{\mathcal{N%
}}\left( \frac{\mu \sqrt{t-s}}{\sigma }\right) \right) \mathrm{d}t.
\label{inverse}
\end{equation}%
Since $X$ has paths of unbounded variation (i.e., $W_{\lambda }(0)=0$) and
from \eqref{star} at $x=0$, we have 
\begin{equation*}
\lambda ^{-1}\mathbb{P}\left( \mathcal{O}^{X}_{\mathrm{e}_{\lambda }}\in 
\mathrm{d} s\right) =\frac{ \mathrm{e}^{-\lambda s}}{\Phi _{\lambda }}%
\Lambda ^{\prime }\left( 0,s\right) =\frac{\mathrm{e}^{-\lambda s}}{\Phi
_{\lambda }}\frac{2}{\sigma ^{2}}\left\{ \frac{\sigma \mathrm{e}^{-(\mu
^{2}/2\sigma ^{2})s}}{\sqrt{2\pi s}}-\mu \bar{\mathcal{N}}\left( \frac{\mu 
\sqrt{s}}{\sigma }\right) \right\} \mathrm{d}s.
\end{equation*}%
Using Laplace inversion, we finally obtain 
\begin{eqnarray*}
\mathbb{P}\left( \mathcal{O}^{X}_{t}\in \mathrm{d} s\right) &=&\frac{2}{%
\sigma ^{2}}\left\{ \frac{\sigma \mathrm{e}^{-(\mu ^{2}/2\sigma ^{2})s}}{%
\sqrt{2\pi s}}-\mu \bar{ \mathcal{N}}\left( \frac{\mu \sqrt{s}}{\sigma }%
\right) \right\} \\
&&\times \left\{ \mu +\frac{\sigma \mathrm{e}^{-(\mu ^{2}/2\sigma
^{2})\left( t-s\right) }}{\sqrt{2\pi \left( t-s\right) }}-\mu \bar{\mathcal{%
N }}\left( \frac{\mu \sqrt{t-s}}{\sigma }\right) \right\} \mathrm{d}s,
\end{eqnarray*}%
which is consistent with the result in Akahori \cite{akahori1995}. But we
point out that our approach is under a more general framework of spectrally
negative L\'{e}vy process, while Akahori \cite{akahori1995} uses the
specific Feynman--Kac formula for Brownian motions. In particular, letting $%
\sigma =1$, $\mu =0$, and integrating the law of $\mathcal{O}^{X}_{t}$ over $%
[r,\infty ) $, one obtains the famous Paul L\'{e}vy's arcsine law, that is, 
\begin{equation*}
\mathbb{P}\left( \mathcal{O}^{X}_{t}>r\right) =1-\dfrac{2}{\pi }\arcsin
\left( \sqrt{ \dfrac{r}{t}}\right) ,\quad 0<r<t.
\end{equation*}

\subsubsection{Cram\'{e}r-Lundberg process with exponential claims}

Let $X$ be a Cram\'{e}r-Lundberg risk process with exponentially distributed
claims, i.e.,\ 
\begin{equation*}
X_{t}=c t-\sum_{i=1}^{N_{t}}C_{i},
\end{equation*}%
where $\{N_{t}\}_{t\geq 0}$ is a Poisson process with intensity $\eta >0$,
and $\{C_{1},C_{2},\dots \}$ are independent and exponentially distributed
random variables with parameter $\alpha $, also independent of $N$. The
scale function of $X$ is known to be 
\begin{equation*}
W(x)=\frac{1}{c-\eta /\alpha }\left( 1-\frac{\eta }{c\alpha }\mathrm{e}^{( 
\frac{\eta }{c}-\alpha )x}\right) ,
\end{equation*}%
and the right-inverse has the closed-form expression 
\begin{equation*}
\Phi _{\lambda }=\frac{1}{2c}\left( \lambda +\eta -c\alpha +\sqrt{\left(
\lambda +\eta -c\alpha \right) ^{2}+4c\alpha \lambda }\right) .
\end{equation*}%
As noted in Loeffen et al. \cite{loeffenetal2013}, we have%
\begin{equation*}
\mathbb{P}\left( \sum_{i=1}^{N_{s}}C_{i}\in \mathrm{d}y\right) =\mathrm{e}%
^{-\eta s}\left( \delta _{0}(\mathrm{d}y)+\mathrm{e}^{-\alpha
y}\sum_{m=0}^{\infty }\frac{(\alpha \eta s)^{m+1}}{m!(m+1)!}y^{m}\mathrm{d}%
y\right) ,
\end{equation*}%
and consequently 
\begin{eqnarray*}
\int_{0}^{\infty }z\mathbb{P}\left( X_{s}\in \mathrm{d}z\right)
&=&\int_{0}^{cs}z\mathrm{e}^{-\eta s}\left( \delta _{0}(cs-\mathrm{d}z)+ 
\mathrm{e}^{-\alpha (cs-z)}\sum_{m=0}^{\infty }\frac{(\alpha \eta s)^{m+1}}{
m!(m+1)!}(cs-z)^{m}\mathrm{d}z\right) \\
&=&\mathrm{e}^{-\eta s}\left( cs+\sum_{m=0}^{\infty }\frac{(\eta s)^{m+1}}{
m!(m+1)!}\left[ cs\Gamma (m+1,cs\alpha )-\frac{1}{\alpha }\Gamma
(m+2,cs\alpha )\right] \right) ,
\end{eqnarray*}%
where $\Gamma (a,x)=\int_{0}^{x}\mathrm{e}^{-t}t^{a-1}\mathrm{d}t$ is the
incomplete gamma function, and 
\begin{equation*}
\frac{\eta }{c\alpha }\int_{0}^{\infty }\mathrm{e}^{(\frac{\eta }{c}-\alpha
)z}z\mathbb{P}(X_{s}\in \mathrm{d}z)=\int_{0}^{\infty }z\mathbb{P}(X_{s}\in 
\mathrm{d}z)-(c-\eta /\alpha )s.
\end{equation*}%
Then, 
\begin{equation*}
\Lambda ^{\prime }\left( 0,s\right) =\frac{\alpha }{c}e^{-\eta s}\left(
c+\sum_{i=0}^{\infty }\frac{\eta ^{m+1}r^{m}}{m!\left( m+1\right) !}\left(
cs\Gamma \left( m+1,cs\alpha \right) -\frac{1}{\alpha }\Gamma \left(
m+2,cs\alpha \right) \right) \right) .
\end{equation*}%
and 
\begin{equation*}
\frac{1}{\Phi _{\lambda }c}=\frac{1}{\sqrt{\left( \lambda +\eta -c\alpha
\right) ^{2}+4c\alpha \lambda }-\left( c\alpha -\lambda -\eta \right) }.
\end{equation*}%
Since $X$ is of bounded variation paths (i.e., $W_{\lambda }(0)>0$), we have 
\begin{eqnarray*}
\lambda ^{-1}\mathbb{P}\left( \mathcal{O}^{X}_{\mathrm{e}_{\lambda }}\in 
\mathrm{d} s\right) = \frac{1}{\Phi _{\lambda }}W_{\lambda }\left( 0\right)
\delta _{0}\left( \mathrm{d}s\right) +\frac{\mathrm{e}^{-\lambda s}}{\Phi
_{\lambda }}\Lambda ^{\prime }\left( 0,s\right) \mathrm{d}s.
\end{eqnarray*}%
As shown in Gu\'{e}rin and Renuad \cite{guerinrenaud2015}, we have 
\begin{equation*}
\frac{1}{\Phi _{\lambda }c}=\int_{0}^{\infty }\mathrm{e}^{-\lambda t}a_{t}%
\mathrm{d}t,
\end{equation*}%
where 
\begin{equation*}
a_{t}=\left( 1-\frac{\eta }{c\alpha }\right) _{+}+\frac{2 \eta }{\pi }%
\mathrm{e}^{-\left( \eta +c\alpha \right) t}\int_{-1}^{1}\frac{\sqrt{1-u^{2}}%
\mathrm{e} ^{-2\sqrt{c\alpha \eta }tu}}{\eta +c\alpha +2\sqrt{c\alpha \eta }u%
}\mathrm{d}t.
\end{equation*}
Then, we also have 
\begin{equation*}
\frac{\mathrm{e}^{-\lambda s}}{\Phi _{\lambda }c}=\int_{0}^{\infty }\mathrm{e%
}^{-\lambda t}\left( a_{t-s}\mathbf{1}_{\left(0,t\right) } (s)\right)\mathrm{%
d}t,
\end{equation*}%
We then obtain the following expression for the distribution of the
occupation time $\mathcal{O}^{X}_{t}$ which is more compact than the one in 
\cite{guerinrenaud2015}: for $t>0$, 
\begin{equation*}
\mathbb{P}\left( \mathcal{O}^{X}_{t}\in \mathrm{d}s\right) =a_{t}\delta
_{0}\left( \mathrm{d} s\right) +c\Lambda ^{\prime }\left( 0,s\right) a_{t-s}%
\mathbf{1}_{\left( 0,t\right) }\left( s\right) \mathrm{d} s.
\end{equation*}
\\
For the next two examples, we aim to provide a characterization of $\mathcal{O}^{X}_{\mathrm{e}_{\lambda }}$ (rather than $\mathcal{O}^{X}_{t}$). As shown in Equation \eqref{mainres} (and \eqref{mainres0}), it is sufficient to identify the scale function $W_\lambda(x)$ and the density of $X_{t}$. For completeness, we recall known results pertaining to these quantities.
\subsubsection{Jump diffusion risk process with phase-type claims}

As a generalization of the previous two examples, we consider a jump diffusion
risk process with phase-type claims, that is, 
\begin{equation*}
X_{t}=c t+\sigma B_{t}-\sum_{i=1}^{N_{t}}C_{i},
\end{equation*}%
where $\sigma \geq 0$, $\{B_{t}\}_{t\geq 0}$ is a standard Brownian motion, $%
\{N_{t}\}_{t\geq 0}$ is a Poisson process with intensity $\eta >0$, and $%
\{C_{1},C_{2},\dots \}$ are independent random variables with common
phase-type distribution with with the minimal representation $(m,\mathbf{T},%
\boldsymbol{\alpha })$, i.e.\ its cumulative distribution function is given
by $F(x)=1-\boldsymbol{\alpha }\mathrm{e}^{\mathbf{T}x}\mathbf{1}$, where $%
\mathbf{T}$ is an $m\times m$ matrix of a continuous-time killed Markov
chain, its initial distribution is given by a simplex $\boldsymbol{\alpha }%
=[\alpha _{1},...,\alpha _{m}]$, and $\mathbf{1}$ denotes a column vector of
ones. All of the aforementioned objects are mutually independent (for more
details we refer to Egami and Yamazaki \cite{egamiyamazaki2014}).

The Laplace exponent of $X$ is known to be of the form
\begin{equation}
\psi (\lambda )=c\lambda +\frac{\sigma ^{2}\lambda ^{2}}{2}+\eta \left( 
\boldsymbol{\alpha }(\lambda \mathbf{I}-\mathbf{T})^{-1}\mathbf{t}-1\right) ,
\label{psi_Jump-diffusion}
\end{equation}%
where $\mathbf{t}=-\mathbf{T}\mathbf{1}$. Let $\rho _{j,\lambda}$ be the
roots with negative real parts of the equation $\theta \mapsto \psi
(\theta )=\lambda$. Since we assume the net profit condition $\mathbb{E}[X_1]>0$,
from Proposition 5.4 in Kuznetsov et al. \cite{kuznetsovetal2012}, we have
that the $\rho _{j,\lambda}$'s are distinct roots. Then, from Proposition 2.1 in 
\cite{egamiyamazaki2014}, we have 
\begin{equation*}
W_\lambda(x)=\frac{\me^{\Phi_\lambda x}}{\psi^\prime (\Phi_\lambda )}+\sum_{j\in \mathcal{I}_{\lambda }}A_{j,\lambda }%
\mathrm{e}^{\rho _{j,\lambda }x},
\end{equation*}%
where $A_{j,\lambda }=\frac{1}{\psi ^{\prime }(\rho _{j,\lambda })}$ and $\mathcal{I}_{\lambda }$
is the set of indices corresponding to the $\rho _{j,\lambda}$'s. 
Moreover, 
\begin{equation*}  \label{phase density}
\mathbb{P}(X_{t}\in \mathrm{d}z)=\mathrm{e}^{-\eta t}\sum_{k=0}^{\infty }%
\frac{(\eta t)^{k}}{k!}\int_{0}^{\infty }F^{\ast k}(\mathrm{d}y)\mathcal{N}%
\left( (\mathrm{d}z+y-ct)\sigma \sqrt{t}\right) ,
\end{equation*}%
where $\mathcal{N}$ is the cumulative distribution function of a standard
normal random variable, $F^{\ast k}$ is the $k$-th convolution of $F$ and
for $k=0$ we understand $F^{\ast 0}(\mathrm{d}y)=\delta _{0}(\mathrm{d}y)$.

\subsubsection{Stable risk process}

We suppose that $X$ is 
a spectrally negative $\alpha $-stable
process with $\alpha =3/2$. In this case, the Laplace exponent of $X$ is
given by $\psi (\lambda )=\lambda ^{3/2}$. Then, for $q,x\geq 0$, we
have 
\begin{equation*}
W_\lambda(x)=\dfrac{3 \sqrt{x}}{2}E^{\prime}_{3/2}(\lambda x^{3/2})
\end{equation*}%
where $E_{3/2}= \sum_{k\geq 0}z^{k}/\Gamma(1+3k/2) $ is the Mittag-Leffler function of order $3/2$. As noted in
Loeffen et al. \cite{loeffenetal2013}, we have 
\begin{equation*}
\mathbb{P}(X_{t}\in \mathrm{d}y)=\mathbb{P}(t^{2/3}X_{1}\in \mathrm{d}y)=%
\begin{cases}
\sqrt{\frac{3}{\pi }}t^{2/3}y^{-1}\mathrm{e}^{-u/2}W_{1/2,1/6}\left(
u\right) \mathrm{d}y & y>0, \\ 
-\frac{1}{2\sqrt{3\pi }}t^{2/3}y^{-1}\mathrm{e}^{u/2}W_{-1/2,1/6}\left(
u\right) \mathrm{d}y & y<0,%
\end{cases}%
\end{equation*}%
where $u=\frac{4}{27}t^{9/2}|y|^{3}$ and $W_{\kappa ,\mu }$ is Whittaker's
W-function (see Lebedev \cite{lebedev}). The density of $X_{t}$ is readily obtained by a simple change of
variable. 

\section{Occupation times of the Refracted L\'{e}vy process}

\label{section6} We now extend our results to a refracted spectrally
negative L\'{e}vy process $U=\{U_t\}_{t\geq 0}$ at level $0$ defined as 
\begin{equation*}
U_t = X_t - \delta \int^{t}_0 \mathbf{1}_{\{U_s >0\}} \mathrm{d}s , \quad t
\geq 0 , 
\end{equation*}
where $\delta \geq 0$ is the refraction parameter. As discussed in Kyprianou
and Loeffen \cite{kyprianouloeffen2010}, such process exists and it is a
skip-free upward strong Markov process. Above $0$, the surplus process $U$
evolves as $Y=\{Y_t=X_t-\delta t\}_{t\geq 0}$ for which the Laplace exponent
is given by 
\begin{equation*}
\lambda \mapsto \psi(\lambda) - \delta \lambda , 
\end{equation*}
with right-inverse $\varphi_q = \sup \{ \lambda \geq 0 : \psi(\lambda) -
\delta \lambda = q\}$. Then, for each $q \geq 0$, we define the scale
functions of $Y$, namely $\mathbb{W}_{q}$ and $\mathbb{Z}_{q}$, by 
\begin{equation*}
\int_0^{\infty} \mathrm{e}^{- \lambda y} \mathbb{W}_q (y) \mathrm{d}y = 
\frac{1}{\psi_q(\lambda) - \delta \lambda } , \quad \text{ $\lambda >
\varphi_q$}, 
\end{equation*}
and 
\begin{equation*}
\mathbb{Z}_{\delta,q }\left( x,\theta\right)=e^{\theta x}\left(
1-\left(\psi_q(\theta) - \delta \theta \right)\int_{0}^{x}\mathrm{e}%
^{-\theta z}\mathbb{W}_{q}\left( z\right)\mathrm{d} z\right).
\end{equation*}
We also have 
\begin{equation*}
\mathbb{Z}_{q}(x)=\mathbb{Z}_{\delta,q }\left( x,0\right) = 1 + q \int_0^x 
\mathbb{W}_q (y)\mathrm{d }y.
\end{equation*}
We denote the \textit{delayed $q$-scale function of $Y$} by 
\begin{equation*}
\Lambda_{\delta} ^{\left( q\right) }\left( x,s\right) =\int_{0}^{\infty }%
\mathbb{W}_{q}\left( x+z\right) \frac{z}{s}\mathbb{P}\left( X_{s}\in \mathrm{%
d}z\right).
\end{equation*}
In \cite{kyprianouloeffen2010} and \cite{renaud2014}, many fluctuation
identities for the refracted process are expressed in terms of the \textit{%
scale function} of $U$, that is, for $q \geq 0$ and for $x,z\in \mathbb{R}$,
set 
\begin{equation}  \label{small w}
w^{\left(q\right)}(x;z) = W_{q}(x-z) + \delta\mathbf{1}_{\left\lbrace x\geq
0\right\rbrace}\int^{x}_{0}\mathbb{W}_{q}(x-y)W_q^{^{\prime }}(y-z)\mathrm{d}
y .
\end{equation}
Note that when $x<0$, we have 
\begin{equation*}
w^{\left(q\right)}(x;z)=W_{q}(x-z) , 
\end{equation*}
and when $q=0$, we will write $w^{(0)}(x;z)=w(x;z)$. 
First, for $a \in \mathbb{R}$, we define the following first-passage
stopping times: 
\begin{align*}
\nu_a^- &= \inf\{t>0 \colon Y_t<a\} \quad \text{and} \quad \nu_a^+ =
\inf\{t>0 \colon Y_t\geq a\} \\
\kappa_a^- &= \inf\{t>0 \colon U_t<a\} \quad \text{and} \quad \kappa_a^+ =
\inf\{t>0 \colon U_t\geq a\}.
\end{align*}
Since $Y$ is also a spectrally negative L\'{e}vy process, the identities
for $X$ also hold for $Y$. For example, for $x\in \mathbb{R}$, 
\begin{equation}  \label{E:Paris001}
\mathbb{E}_{x}\left[ \mathrm{e}^{-\lambda \nu _{0}^{-}+rY_{\nu _{0}^{-}}}%
\right] =\mathbb{Z}_{\delta ,\lambda }\left( x,r\right) -\left( \frac{%
\psi_\lambda(r)-\delta r}{r-\varphi _{\lambda }}\right) \mathbb{W}_{\lambda
}\left( x\right).
\end{equation}
We denote by $\kappa_U^{q}$ the time of Parisian ruin with exponential
delays for the refracted L\'{e}vy process $U$ 
\begin{equation*}
\kappa_U^{q}=\inf \left\{ t>0 : t-g^{U}_{t}>\mathrm{e}^{g^{U}_{t}}_{q}
\right\}.
\end{equation*}
We have the following new results for the Laplace transforms of $\kappa_U^{q}
$ and $\mathcal{O}^{U}_{\mathrm{e}_{\lambda}}$.

\begin{lemma}
\label{lemmU} For $q,\lambda >0$ and $x\in \mathbb{R}$,  
\begin{equation*}
\mathbb{E}_{x}\left[ \mathrm{e}^{-\lambda \kappa _{U}^{q}}\right] =\frac{q}{
\lambda +q}\left( \mathbb{Z}_{\lambda }\left( x\right) -\dfrac{\lambda
\left( \Phi _{q+\lambda }-\varphi _{\lambda }\right) }{\left( q-\delta \Phi
_{\lambda +q}\right) \varphi _{\lambda }}\mathbb{Z}_{\delta ,\lambda }\left(
x,\Phi _{\lambda +q}\right) \right) ,
\end{equation*}
and consequently,  
\begin{equation}  \label{mainLapU}
\mathbb{E}_{x}\left[ \mathrm{e}^{-q\mathcal{O}^{U}_{\mathrm{e}_{\lambda }}}%
\right] =\dfrac{ q\lambda \left( \Phi _{q+\lambda }-\varphi _{\lambda
}\right) }{\left( \lambda +q\right) (q-\delta \Phi _{q+\lambda })\varphi
_{\lambda }}\mathbb{Z} _{\delta ,\lambda }\left( x,\Phi _{\lambda +q}\right)
-\dfrac{q\mathbb{Z} _{\lambda }(x)}{q+\lambda }+1.
\end{equation}
\end{lemma}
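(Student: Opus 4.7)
The plan is to mirror the strategy of Lemma 1's proof: first establish the Laplace transform of $\kappa_U^q$, then obtain the expression for $\mathbb{E}_x[\me^{-q\mathcal{O}^{U}_{\me_\lambda}}]$ from it. The bridge is the identity
$$\mathbb{E}_x\!\left[\me^{-q\mathcal{O}^{U}_{\me_\lambda}}\right] = \mathbb{P}_x\!\left(\kappa_U^q > \me_\lambda\right) = 1 - \mathbb{E}_x\!\left[\me^{-\lambda \kappa_U^q}\right],$$
which carries over verbatim from the non-refracted setting by a Poisson-alarm argument: attaching a rate-$q$ Poisson alarm process that only fires while $U_s<0$, the number of alarms in $[0,\me_\lambda]$ is (conditionally on the path of $U$) Poisson with mean $q\,\mathcal{O}^{U}_{\me_\lambda}$, while by memorylessness the first alarm coincides almost surely with $\kappa_U^q$. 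In particular, the second formula in \eqref{mainLapU} follows immediately once the first is established, so the bulk of the work is in computing $f(x):=\mathbb{E}_x[\me^{-\lambda \kappa_U^q}]$.

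To compute $f$, I would use the strong Markov property together with the fact that $U$ coincides with $Y=X-\delta t$ on $[0,\infty)$ and with $X$ on $(-\infty,0)$. For $y\leq 0$, starting from $U_0=y$, the process $U$ evolves as $X$ until it returns (continuously, by spectral negativity) to $0$ at time $\tau_0^+$, while an independent delay $\me_q$ races against $\tau_0^+$. Using $\mathbb{E}_y[\me^{-r\tau_0^+}]=\me^{\Phi_r y}$ for $y\leq 0$, a short calculation yields
$$g(y):=\mathbb{E}_y\!\left[\me^{-\lambda \kappa_U^q}\right] = \frac{q}{\lambda+q}\bigl(1-\me^{\Phi_{\lambda+q} y}\bigr) + \me^{\Phi_{\lambda+q} y}\,f(0),\qquad y\leq 0.$$
For $x\geq 0$, the first downward crossing of $U$ through $0$ equals $\nu_0^-$ (first passage of $Y$ below $0$), so the strong Markov property of $U$ at $\nu_0^-$ gives
$$f(x) = \frac{q}{\lambda+q}\,\mathbb{E}_x\!\left[\me^{-\lambda \nu_0^-}\right] + \left(f(0) - \frac{q}{\lambda+q}\right)\mathbb{E}_x\!\left[\me^{-\lambda \nu_0^- + \Phi_{\lambda+q} Y_{\nu_0^-}}\right].$$

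Both expectations on the right are explicit via \eqref{E:Paris001}. Setting $r=0$ (using $\psi_\lambda(0)=-\lambda$) and $r=\Phi_{\lambda+q}$ (using $\psi_\lambda(\Phi_{\lambda+q})-\delta\Phi_{\lambda+q} = q-\delta\Phi_{\lambda+q}$) one obtains
$$\mathbb{E}_x\!\left[\me^{-\lambda \nu_0^-}\right] = \mathbb{Z}_\lambda(x) - \frac{\lambda}{\varphi_\lambda}\mathbb{W}_\lambda(x), \qquad \mathbb{E}_x\!\left[\me^{-\lambda \nu_0^- + \Phi_{\lambda+q}Y_{\nu_0^-}}\right] = \mathbb{Z}_{\delta,\lambda}(x,\Phi_{\lambda+q}) - \frac{q-\delta\Phi_{\lambda+q}}{\Phi_{\lambda+q}-\varphi_\lambda}\mathbb{W}_\lambda(x).$$
Substituting these into the expression for $f(x)$ and evaluating at $x=0$ (where $\mathbb{Z}_\lambda(0)=\mathbb{Z}_{\delta,\lambda}(0,\cdot)=1$) yields a single linear equation for $f(0)$; back-substitution then produces $f(x)$ for all $x\in\mathbb{R}$, and the desired formula for $\mathbb{E}_x[\me^{-q\mathcal{O}^{U}_{\me_\lambda}}]$ is obtained by taking $1-f(x)$.

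The main obstacle is the algebraic collapse: the coefficient of $\mathbb{W}_\lambda(x)$ (whose boundary value differs in the bounded- and unbounded-variation cases) must cancel so that the final answer depends only on $\mathbb{Z}_\lambda(x)$ and $\mathbb{Z}_{\delta,\lambda}(x,\Phi_{\lambda+q})$. This cancellation hinges on the identity $\psi_\lambda(\Phi_{\lambda+q})-\delta\Phi_{\lambda+q} = q-\delta\Phi_{\lambda+q}$, which is precisely what produces the denominator $(q-\delta\Phi_{\lambda+q})\varphi_\lambda$ in the final expression. A useful sanity check is that setting $\delta=0$ (so that $U=X$, $\varphi_\lambda=\Phi_\lambda$, $\mathbb{Z}_{\delta,\lambda}=Z_\lambda$, and $\mathbb{W}_\lambda=W_\lambda$) recovers the Laplace transform of $\kappa^q$ used in the proof of Lemma 1.
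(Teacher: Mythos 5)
Your proposal is correct and takes essentially the same route as the paper: the bridge identity $\mathbb{E}_x[\mathrm{e}^{-q\mathcal{O}^{U}_{\mathrm{e}_\lambda}}]=1-\mathbb{E}_x[\mathrm{e}^{-\lambda\kappa_U^q}]$, the race between $\mathrm{e}_q$ and $\tau_0^+$ during excursions of $U$ below $0$ (where $U$ evolves as $X$), the strong Markov decomposition at $\nu_0^-$ using that $U$ evolves as $Y$ above $0$, evaluation of both exit quantities via \eqref{E:Paris001} at $r=0$ and $r=\Phi_{\lambda+q}$, and solving the resulting linear equation for the value at $x=0$. The only addition in the paper is that this renewal argument is first carried out for paths of bounded variation, with the unbounded-variation case obtained by the approximation procedure of Kyprianou and Loeffen \cite{kyprianouloeffen2010} — a caveat worth appending to your argument.
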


\begin{proof}
For $x<0$, using the strong Markov property of $U$ and the fact that $
U_{\kappa _{0}^{+}}=0$ on $\left\{ \kappa _{0}^{+}<\infty \right\} $, we 
have  
\begin{equation*}
\mathbb{E}_{x}\left[ \mathrm{e}^{-\lambda \kappa _{U}^{q}}\right] =\mathbb{E}
_{x}\left[ \mathrm{e}^{-\lambda \mathrm{e}_{q}}\mathbf{1}_{\left\{ \kappa
_{0}^{+}>\mathrm{e}_{q}\right\} }\right] +\mathbb{E}_{x}\left[ \mathrm{e}
^{-(q+\lambda )\kappa _{0}^{+}}\right] \mathbb{E}\left[ \mathrm{e}^{-\lambda
\kappa _{U}^{q}}\right] .
\end{equation*}
Since $\left\{ X_{t},t<\tau _{0}^{+}\right\} $ and $\left\{ U_{t},t<\kappa 
_{0}^{+}\right\} $ have the same distribution with respect to $\mathbb{P}%
_{x}  $ when $x<0$, we further have  
\begin{equation*}
\mathbb{E}_{x}\left[ \mathrm{e}^{-\lambda \kappa _{U}^{q}}\right] =\mathbb{E}
_{x}\left[ \mathrm{e}^{-\lambda \mathrm{e}_{q}}\mathbf{1}_{\left\{ \tau
_{0}^{+}>\mathrm{e}_{q}\right\} }\right] +\mathbb{E}_{x}\left[ \mathrm{e}
^{-(q+\lambda )\tau _{0}^{+}}\right] \mathbb{E}\left[ \mathrm{e}^{-\lambda
\kappa _{U}^{q}}\right] .
\end{equation*}
For $x\geq 0$, using the strong Markov property of $U$, we get  
\begin{eqnarray}
\mathbb{E}_{x}\left[ \mathrm{e}^{-\lambda \kappa _{U}^{q}}\right] &=& 
\mathbb{E}_{x}\left[ \mathrm{e}^{-\lambda \kappa _{0}^{-}}\mathbb{E}
_{U_{\kappa _{0}^{-}}}\left[ \mathrm{e}^{-\lambda \mathrm{e}_{q}}\mathbf{1}
_{\left\{ \tau _{0}^{+}>\mathrm{e}_{q}\right\} }\right] \right]  \notag \\
&&+\mathbb{E}_{x}\left[ \mathrm{e}^{-\lambda \kappa _{0}^{-}}\mathbb{E}
_{U_{\kappa _{0}^{-}}}\left[ \mathrm{e}^{-(q+\lambda )\tau _{0}^{+}}\right] %
\right] \mathbb{E}\left[ \mathrm{e}^{-\lambda \kappa _{U}^{q}}\right]  \notag
\\
&=&\frac{q}{q+\lambda }\left( \mathbb{E}_{x}\left[ \mathrm{e}^{-\lambda
\kappa _{0}^{-}}\right] -\mathbb{E}_{x}\left[ \mathrm{e}^{-\lambda \kappa
_{0}^{-}+\Phi _{\lambda +q}U_{\kappa _{0}^{-}}}\right] \right)  \notag \\
&&-\mathbb{E}_{x}\left[ \mathrm{e}^{-\lambda \kappa _{0}^{-}+\Phi _{\lambda
+q}U_{\kappa _{0}^{-}}}\right] \mathbb{E}\left[ \mathrm{e}^{-\lambda \kappa
_{U}^{q}}\right]  \notag \\
&=&\frac{q}{q+\lambda }\left( \mathbb{E}_{x}\left[ \mathrm{e}^{-\lambda \nu
_{0}^{-}}\right] -\mathbb{E}_{x}\left[ \mathrm{e}^{-\lambda \nu
_{0}^{-}+\Phi _{\lambda +q}Y_{\nu _{0}^{-}}}\right] \right)  \notag \\
&&-\mathbb{E}_{x}\left[ \mathrm{e}^{-\lambda \nu _{0}^{-}+\Phi _{\lambda
+q}Y_{\nu _{0}^{-}}}\right] \mathbb{E}\left[ \mathrm{e}^{-\lambda \kappa
_{U}^{q}}\right] ,  \label{alx}
\end{eqnarray}
where in the last equality we used the fact that $\left\{ Y_{t},t<\nu 
_{0}^{-}\right\} $ and $\left\{ U_{t},t<\kappa _{0}^{-}\right\} $ have the 
same distribution under $\mathbb{P}_{x}$. Note that the above expression 
holds for all $x\in \mathbb{R}$.

Now, we assume $X$ and $Y$ have paths of bounded variation. Solving for $ 
\mathbb{E}\left[ \mathrm{e}^{-\lambda \kappa _{U}^{q}}\right] $ and using  %
\eqref{E:Paris001}, we get  
\begin{eqnarray}
\mathbb{E}\left[ \mathrm{e}^{-\lambda \kappa _{U}^{q}}\right] &=&\frac{ 
\frac{q}{q+\lambda }\left( \mathbb{E}\left[ \mathrm{e}^{-\lambda \nu
_{0}^{-}}\right] -\mathbb{E}\left[ \mathrm{e}^{-\lambda \nu _{0}^{-}+\Phi
_{\lambda +q}Y_{\nu _{0}^{-}}}\right] \right) }{1-\mathbb{E}\left[ \mathrm{e}
^{-\lambda \nu _{0}^{-}+\Phi _{\lambda +q}Y_{\nu _{0}^{-}}}\right] }  \notag
\label{x0} \\
&=&\dfrac{q}{q+\lambda }-\dfrac{q}{\left( \lambda +q\right) }\frac{\lambda
\left( \Phi _{q+\lambda }-\varphi _{\lambda }\right) }{\varphi _{\lambda
}(q-\delta \Phi _{q+\lambda })}.
\end{eqnarray}
Substituting \eqref{E:Paris001} and \eqref{x0} into \eqref{alx}, we have  
\begin{eqnarray*}
\mathbb{E}_{x}\left[ \mathrm{e}^{-\lambda \kappa _{U}^{q}}\right] &=&\frac{q 
}{q+\lambda }\left( \mathbb{Z}_{\lambda }\left( x\right) -\frac{\lambda }{
\varphi _{\lambda }}\mathbb{W}_{\lambda }\left( x\right) \right) \\
&&-\frac{q}{q+\lambda }\left( \mathbb{Z}_{\delta ,\lambda }\left( x,\Phi
_{\lambda +q}\right) -\frac{(q-\delta \Phi _{q+\lambda })}{\left( \Phi
_{\lambda +q}-\varphi _{\lambda }\right) }\mathbb{W}_{\lambda }\left(
x\right) \right) \\
&&+\frac{q}{q+\lambda }\left( \mathbb{Z}_{\delta ,\lambda }\left( x,\Phi
_{\lambda +q}\right) -\frac{(q-\delta \Phi _{q+\lambda })}{\left( \Phi
_{\lambda +q}-\varphi _{\lambda }\right) }\mathbb{W}_{\lambda }\left(
x\right) \right) \\
&&-\dfrac{q}{\left( \lambda +q\right) }\frac{\lambda \left( \Phi _{q+\lambda
}-\varphi _{\lambda }\right) }{\varphi _{\lambda }(q-\delta \Phi _{q+\lambda
})}\left( \mathbb{Z}_{\delta ,\lambda }\left( x,\Phi _{\lambda +q}\right) - 
\frac{(q-\delta \Phi _{q+\lambda })}{\left( \Phi _{\lambda +q}-\varphi
_{\lambda }\right) }\mathbb{W}_{\lambda }\left( x\right) \right) \\
&=&\frac{q}{q+\lambda }\left( \mathbb{Z}_{\lambda }\left( x\right) -\frac{
\lambda }{\varphi _{\lambda }}\mathbb{W}_{\lambda }\left( x\right) \right) \\
&&-\dfrac{q}{\left( \lambda +q\right) }\frac{\lambda \left( \Phi _{q+\lambda
}-\varphi _{\lambda }\right) }{\varphi _{\lambda }(q-\delta \Phi _{q+\lambda
})}\left( \mathbb{Z}_{\delta ,\lambda }\left( x,\Phi _{\lambda +q}\right) - 
\frac{(q-\delta \Phi _{q+\lambda })}{\left( \Phi _{\lambda +q}-\varphi
_{\lambda }\right) }\mathbb{W}_{\lambda }\left( x\right) \right) \\
&=&\dfrac{q\mathbb{Z}_{\lambda }(x)}{q+\lambda }-\dfrac{q\lambda \left( \Phi
_{q+\lambda }-\varphi _{\lambda }\right) }{\left( \lambda +q\right) \varphi
_{\lambda }}\frac{\mathbb{Z}_{\delta ,\lambda }\left( x,\Phi _{\lambda
+q}\right) }{(q-\delta \Phi _{q+\lambda })}.
\end{eqnarray*}
The case where $X$ has paths of unbounded variation follows using the same 
approximating procedure as in \cite{kyprianouloeffen2010} (see also \cite%
{guerin_renaud_2015}).\newline
Finally, Equation \eqref{mainLapU} is immediate using again the following 
identity from Proposition $3.4$ in \cite{guerinrenaud2015}, namely
\begin{equation*}
\mathbb{E}_{x}\left[ \mathrm{e}^{-q\mathcal{O}^{U}_{\mathrm{e}_{\lambda }}}%
\right] =1-\mathbb{\ E}_{x}\left[ \mathrm{e}^{-\lambda \kappa _{U}^{q}}%
\right] .
\end{equation*}
\end{proof}

Using similar techniques as in the proof of Theorem \ref{maintheo}, we
obtain the following expression for the distribution of $\mathcal{O}^{U}_{%
\mathrm{e} _{\lambda }}$. The result is stated without proof. We point out
that Equations \eqref{mainLapU} and \eqref{mainresU} generalize Corollary 2
of Kyprianou et al. \cite{kyprianou2014occupation} in which the occupation
time is up to an infinite time horizon.

\begin{theorem}
\label{maintheoU} For $\lambda >0$, $x \in \mathbb{R} $ and $y \geq 0$,  
\begin{eqnarray}  \label{mainresU}
\mathbb{P}_{x}\left( \mathcal{O}^{U}_{\mathrm{e}_{\lambda }}\in \mathrm{d}%
y\right) &=& \left( \mathbb{Z}_{\lambda }(x)-\frac{\lambda }{\varphi_\lambda 
} \mathbb{Z}_{\lambda }\left( x\right) \right)\delta _{0}\left( \mathrm{d}%
y\right)  \notag \\
&&+\lambda \mathrm{e}^{-\lambda y}\left( \mathcal{B}_\delta ^{\left( \lambda
\right) }\left( x,y\right) -\lambda \int_{0}^{y}\mathcal{B}_\delta ^{\left(
\lambda \right) }\left( x,s\right) \mathrm{d}s \right) \mathrm{d} y  \notag
\\
&&+\lambda \mathrm{e}^{-\lambda y}\left(\mathbb{Z}_{\lambda}(x)-\frac{%
\lambda }{ \varphi_\lambda }\mathbb{Z}_{\lambda }\left( x\right) \right)%
\mathrm{d} y,
\end{eqnarray}
where  
\begin{eqnarray*}
\mathcal{B}^{\left(\lambda\right) }_{\delta}\left(x,s\right) =\dfrac{
\Lambda_{\delta}^{\left( \lambda \right) \prime }\left( x,s\right)}{
\varphi_\lambda} -\Lambda_{\delta}^{\left( \lambda \right)}\left( x,s\right).
\end{eqnarray*}
\end{theorem}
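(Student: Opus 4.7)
The proof mirrors the argument for Theorem \ref{maintheo}, now with the refracted scale objects $\mathbb{W}_\lambda$, $\mathbb{Z}_{\delta,\lambda}$ and $\Lambda_\delta^{(\lambda)}$ playing the roles of $W_\lambda$, $Z_\lambda$ and $\Lambda^{(\lambda)}$. The starting point is the Laplace transform \eqref{mainLapU} from Lemma \ref{lemmU}, which I would first recast algebraically in a form analogous to \eqref{lapinv}:
\begin{align*}
\mathbb{E}_x\!\left[\mathrm{e}^{-q\mathcal{O}^U_{\mathrm{e}_\lambda}}\right]
&= \Bigl\{1 - \mathbb{Z}_\lambda(x) + \tfrac{\lambda}{\varphi_\lambda}\mathbb{W}_\lambda(x)\Bigr\} + \tfrac{\lambda}{\lambda+q}\Bigl(\mathbb{Z}_\lambda(x) - \tfrac{\lambda}{\varphi_\lambda}\mathbb{W}_\lambda(x)\Bigr) \\
&\quad + \lambda\Bigl(1-\tfrac{\lambda}{\lambda+q}\Bigr)\Bigl[\tfrac{\Phi_{\lambda+q}\mathbb{Z}_{\delta,\lambda}(x,\Phi_{\lambda+q}) - (q - \delta\Phi_{\lambda+q})\mathbb{W}_\lambda(x)}{\varphi_\lambda(q - \delta\Phi_{\lambda+q})} - \tfrac{\mathbb{Z}_{\delta,\lambda}(x,\Phi_{\lambda+q})}{q - \delta\Phi_{\lambda+q}}\Bigr].
\end{align*}
The quantity in curly braces is $\mathbb{P}_x(\kappa_0^- > \mathrm{e}_\lambda)$ and, being independent of $q$, will invert to the Dirac mass at $0$, while the prefactor $\lambda/(\lambda+q)$ will give rise to a factor of $\lambda\mathrm{e}^{-\lambda y}\,\mathrm{d}y$.

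The crux is to establish the refracted analogues of identities \eqref{id3} and \eqref{id1}:
\begin{align*}
\tfrac{\mathbb{Z}_{\delta,\lambda}(x,\Phi_{\lambda+q})}{q - \delta\Phi_{\lambda+q}} &= \int_0^\infty \mathrm{e}^{-qr}\bigl(\mathrm{e}^{-\lambda r}\Lambda_\delta^{(\lambda)}(x,r)\bigr)\,\mathrm{d}r, \\
\tfrac{\Phi_{\lambda+q}\mathbb{Z}_{\delta,\lambda}(x,\Phi_{\lambda+q}) - (q - \delta\Phi_{\lambda+q})\mathbb{W}_\lambda(x)}{q - \delta\Phi_{\lambda+q}} &= \int_0^\infty \mathrm{e}^{-qr}\bigl(\mathrm{e}^{-\lambda r}\Lambda_\delta^{(\lambda)\prime}(x,r)\bigr)\,\mathrm{d}r.
\end{align*}
Both are derived from the representation $\mathbb{Z}_{\delta,\lambda}(x,\theta) = (\psi(\theta) - \lambda - \delta\theta)\int_0^\infty \mathrm{e}^{-\theta z}\mathbb{W}_\lambda(x+z)\,\mathrm{d}z$, valid for $\theta > \varphi_\lambda$, specialised to $\theta = \Phi_{\lambda+q}$; one then uses $\mathrm{e}^{-\Phi_{\lambda+q} z} = \int_0^\infty \mathrm{e}^{-(\lambda+q)r}\mathbb{P}(\tau_z^+\in\mathrm{d}r)$, Kendall's identity \eqref{eq:Kendall} for $X$, and Tonelli's theorem exactly as in the unrefracted case. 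The second identity requires an additional integration by parts to extract the boundary term $\mathbb{W}_\lambda(x)$.

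Substituting the two identities back, the bracketed expression collapses to $\mathbb{W}_\lambda(x)/\varphi_\lambda + \int_0^\infty \mathrm{e}^{-qr}(\mathrm{e}^{-\lambda r}\mathcal{B}_\delta^{(\lambda)}(x,r))\,\mathrm{d}r$ with $\mathcal{B}_\delta^{(\lambda)}$ as defined in the statement; the extra constant $\mathbb{W}_\lambda(x)/\varphi_\lambda$ combines with the existing terms, and termwise Laplace inversion then yields \eqref{mainresU} once one notes that multiplication by $\lambda/(\lambda+q)$ on the Laplace side corresponds to convolution with $\lambda\mathrm{e}^{-\lambda\cdot}$ on the time side, so that $\lambda(1-\lambda/(\lambda+q))$ times the Laplace transform of $\mathrm{e}^{-\lambda y}\mathcal{B}_\delta^{(\lambda)}(x,y)$ inverts to $\lambda\mathrm{e}^{-\lambda y}(\mathcal{B}_\delta^{(\lambda)}(x,y) - \lambda\int_0^y \mathcal{B}_\delta^{(\lambda)}(x,s)\,\mathrm{d}s)$. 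I expect the main obstacle to be the algebraic bookkeeping needed to cancel the factor $(q - \delta\Phi_{\lambda+q})$ shared between the numerator coefficient of $\mathbb{Z}_{\delta,\lambda}$ in \eqref{mainLapU} and the denominators of the refracted identities; a secondary subtlety is that the representation of $\mathbb{Z}_{\delta,\lambda}$ used above is only directly available for $\Phi_{\lambda+q} > \varphi_\lambda$, so for small $q$ one should establish the identities first in that regime and then extend them to all $q>0$ by analytic continuation in $q$.
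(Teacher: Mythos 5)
Your proposal is correct and follows exactly the route the paper intends: the paper states Theorem \ref{maintheoU} without proof, appealing to ``similar techniques as in the proof of Theorem \ref{maintheo}'', and your argument --- recasting \eqref{mainLapU} in the form of \eqref{lapinv}, establishing the refracted analogues of \eqref{id3} and \eqref{id1} from $\mathbb{Z}_{\delta,\lambda}(x,\theta)=\left(\psi_\lambda(\theta)-\delta\theta\right)\int_0^\infty\mathrm{e}^{-\theta z}\mathbb{W}_\lambda(x+z)\,\mathrm{d}z$ together with Kendall's identity \eqref{eq:Kendall} and an integration by parts, then inverting termwise --- is precisely that adaptation, with the correct bookkeeping of the factor $q-\delta\Phi_{\lambda+q}$ and a legitimate handling of the regime $\Phi_{\lambda+q}\leq\varphi_\lambda$ (indeed, since both sides are Laplace transforms in $q$ of finite measures, agreement for large $q$ already suffices). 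One remark: your derivation produces the atom $1-\bigl(\mathbb{Z}_\lambda(x)-\frac{\lambda}{\varphi_\lambda}\mathbb{W}_\lambda(x)\bigr)$ and density coefficient $\mathbb{Z}_\lambda(x)-\frac{\lambda}{\varphi_\lambda}\mathbb{W}_\lambda(x)$, which do not literally match the displayed \eqref{mainresU}, where both parentheses read $\mathbb{Z}_\lambda(x)-\frac{\lambda}{\varphi_\lambda}\mathbb{Z}_\lambda(x)$ and the leading ``$1-$'' is missing; this is evidently a typographical slip in the statement (compare \eqref{mainres} and \eqref{LapcumuU}), and your version is the correct one.
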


We denote the inverse occupation time of the refracted process $U$ by 
\begin{equation*}
\sigma^{U}_{r}=\inf \left\{ t>0 \colon \mathcal{O}^{U}_t > r \right\},
\end{equation*}
and for which we obtain the following Laplace transform.

\begin{theorem}
\label{maintheoParisianU} For $r,\lambda >0$ and $x \in \mathbb{R} $,  
\begin{eqnarray}  \label{LapcumuU}
\mathbb{E}_x\left[ \mathrm{e}^{-\lambda \sigma ^{U}_{r}}\right] &=& \mathrm{e}^{-\lambda r}\left( \mathbb{Z}_{\lambda }(x)-\frac{\lambda }{\varphi_\lambda 
} \mathbb{W }_{\lambda }\left( x\right) \right)  \notag \\
&&-\lambda \int_{0}^{r}\mathrm{e}^{-\lambda u}\left( \mathcal{B}
_\delta^{\left( \lambda \right) }\left( x,u\right) -\lambda \int_{0}^{u} 
\mathcal{B}_\delta^{\left( \lambda \right) }\left( x,s\right) \mathrm{d}
s\right) \mathrm{d}u.
\end{eqnarray}
\end{theorem}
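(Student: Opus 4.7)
The plan is to mimic the derivation of Theorem \ref{maintheoParisian} from Theorem \ref{maintheo}. The starting point is the standard identity
\[
\mathbb{E}_x\bigl[\mathrm{e}^{-\lambda \sigma^U_r}\bigr] = \mathbb{P}_x(\sigma^U_r < \mathrm{e}_\lambda) = \mathbb{P}_x\bigl(\mathcal{O}^U_{\mathrm{e}_\lambda} > r\bigr),
\]
where the first equality uses independence of $\mathrm{e}_\lambda$ from $U$, and the second follows from the definition $\sigma^U_r = \inf\{t > 0 : \mathcal{O}^U_t > r\}$ together with the continuity and monotonicity of $t \mapsto \mathcal{O}^U_t$. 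This is precisely the refracted analogue of \eqref{cumu}.

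Next I would integrate the density from Theorem \ref{maintheoU} via
\[
\mathbb{P}_x\bigl(\mathcal{O}^U_{\mathrm{e}_\lambda} > r\bigr) = 1 - \mathbb{P}_x\bigl(\mathcal{O}^U_{\mathrm{e}_\lambda} = 0\bigr) - \int_0^r \mathbb{P}_x\bigl(\mathcal{O}^U_{\mathrm{e}_\lambda} \in \mathrm{d}y\bigr).
\]
The Dirac atom at $0$ identified in Theorem \ref{maintheoU} yields $\mathbb{P}_x(\mathcal{O}^U_{\mathrm{e}_\lambda} = 0) = 1 - \bigl(\mathbb{Z}_\lambda(x) - \tfrac{\lambda}{\varphi_\lambda} \mathbb{W}_\lambda(x)\bigr)$, which matches $\mathbb{P}_x(\kappa_0^- > \mathrm{e}_\lambda)$ for the refracted process. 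The absolutely continuous part of the density over $y > 0$ splits naturally into two pieces: a constant-in-$y$ piece $\lambda \mathrm{e}^{-\lambda y} \bigl(\mathbb{Z}_\lambda(x) - \tfrac{\lambda}{\varphi_\lambda} \mathbb{W}_\lambda(x)\bigr)$, and the $\mathcal{B}^{(\lambda)}_\delta$--dependent piece $\lambda \mathrm{e}^{-\lambda y}\bigl(\mathcal{B}^{(\lambda)}_\delta(x,y) - \lambda \int_0^y \mathcal{B}^{(\lambda)}_\delta(x,s)\mathrm{d}s\bigr)$.

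The first piece integrates trivially from $0$ to $r$ to give $\bigl(\mathbb{Z}_\lambda(x) - \tfrac{\lambda}{\varphi_\lambda} \mathbb{W}_\lambda(x)\bigr)(1 - \mathrm{e}^{-\lambda r})$. Combined with $1 - \mathbb{P}_x(\mathcal{O}^U_{\mathrm{e}_\lambda} = 0) = \mathbb{Z}_\lambda(x) - \tfrac{\lambda}{\varphi_\lambda} \mathbb{W}_\lambda(x)$, the $\bigl(\mathbb{Z}_\lambda(x) - \tfrac{\lambda}{\varphi_\lambda} \mathbb{W}_\lambda(x)\bigr)$ terms telescope down to exactly the prefactor $\mathrm{e}^{-\lambda r}\bigl(\mathbb{Z}_\lambda(x) - \tfrac{\lambda}{\varphi_\lambda} \mathbb{W}_\lambda(x)\bigr)$ in \eqref{LapcumuU}. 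The second piece contributes $-\lambda\int_0^r\mathrm{e}^{-\lambda u}\bigl(\mathcal{B}^{(\lambda)}_\delta(x,u) - \lambda\int_0^u \mathcal{B}^{(\lambda)}_\delta(x,s)\mathrm{d}s\bigr)\mathrm{d}u$, which is precisely the integral appearing in the statement with a minus sign.

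There is no deep obstacle: once Theorem \ref{maintheoU} is available, the argument reduces to a clean bookkeeping calculation and the refracted case is structurally identical to the non-refracted one. The only subtlety is tracking the constant-in-$y$ contribution with care, so that its integral against $\lambda\mathrm{e}^{-\lambda y}$ on $[0,r]$ combines with the complement of the atom at $0$ to leave exactly $\mathrm{e}^{-\lambda r}\bigl(\mathbb{Z}_\lambda(x) - \tfrac{\lambda}{\varphi_\lambda} \mathbb{W}_\lambda(x)\bigr)$ rather than some spurious residual.
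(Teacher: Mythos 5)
Your proof is correct and follows exactly the route the paper intends: Theorem \ref{maintheoParisianU} is stated without proof as the refracted analogue of Theorem \ref{maintheoParisian}, which is obtained precisely by the identity $\mathbb{E}_x\left[\mathrm{e}^{-\lambda\sigma^U_r}\right]=\mathbb{P}_x\left(\mathcal{O}^U_{\mathrm{e}_\lambda}>r\right)$ followed by integrating the density of Theorem \ref{maintheoU} over $[0,r]$, and your bookkeeping of the atom, the constant-in-$y$ piece, and the $\mathcal{B}^{(\lambda)}_\delta$ piece is accurate. Note that you have implicitly (and correctly) read the atom in \eqref{mainresU} as $1-\left(\mathbb{Z}_\lambda(x)-\tfrac{\lambda}{\varphi_\lambda}\mathbb{W}_\lambda(x)\right)$, which is what \eqref{LapcumuU} requires; the displayed version of \eqref{mainresU} contains a typo there (missing the ``$1-$'' and writing $\mathbb{Z}_\lambda$ where $\mathbb{W}_\lambda$ is meant).
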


We also obtain the following expression of the probability of inverse
occupation time for the refracted process $U$.

\begin{corol}
\label{LaplaceTrU}  For $r>0$, $x \in \mathbb{R} $ and $\mathbb{E}%
[X_1]>\delta$,  
\begin{eqnarray}  \label{cumuParruinU}
\mathbb{P}_x\left( \sigma^{U} _{r} < \infty \right) =1-\left(\mathbb{E}%
[X_1]-\delta\right)\left( \mathbb{W}(x)+\int_{0}^{r} \Lambda_{\delta}^{
\prime }\left( x,s\right)\mathrm{d} s\right).
\end{eqnarray}
\end{corol}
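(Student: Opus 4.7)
The plan is to pass to the limit $\lambda \to 0^{+}$ in the Laplace transform formula \eqref{LapcumuU} of Theorem~\ref{maintheoParisianU}. Since $\mathrm{e}^{-\lambda \sigma^{U}_{r}}\uparrow\mathbf{1}_{\{\sigma^{U}_{r}<\infty\}}$ as $\lambda\downarrow 0$, monotone convergence yields
\begin{equation*}
\mathbb{P}_{x}\!\left(\sigma^{U}_{r}<\infty\right)=\lim_{\lambda\to 0^{+}}\mathbb{E}_{x}\!\left[\mathrm{e}^{-\lambda\sigma^{U}_{r}}\right],
\end{equation*}
so the task reduces to identifying the limit of the right-hand side of \eqref{LapcumuU}. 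The hypothesis $\mathbb{E}[X_{1}]>\delta$ enters at this stage: since $Y_{t}=X_{t}-\delta t$ is a spectrally negative L\'evy process with $\mathbb{E}[Y_{1}]=\mathbb{E}[X_{1}]-\delta>0$ and right-inverse $\varphi_{\lambda}$ of its Laplace exponent, applying the identity \eqref{limesp} to $Y$ gives the key limit
\begin{equation*}
\lim_{\lambda\to 0^{+}}\frac{\lambda}{\varphi_{\lambda}}=\mathbb{E}[X_{1}]-\delta.
\end{equation*}

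I would then evaluate the limit of each piece in \eqref{LapcumuU}. For the boundary term, continuity of $\mathbb{Z}_{\lambda}$ and $\mathbb{W}_{\lambda}$ in $\lambda$ gives
\begin{equation*}
\mathrm{e}^{-\lambda r}\!\left(\mathbb{Z}_{\lambda}(x)-\tfrac{\lambda}{\varphi_{\lambda}}\mathbb{W}_{\lambda}(x)\right)\longrightarrow 1-(\mathbb{E}[X_{1}]-\delta)\mathbb{W}(x).
\end{equation*}
For the integral term, the natural rewriting
\begin{equation*}
\lambda\mathcal{B}_{\delta}^{(\lambda)}(x,s)=\tfrac{\lambda}{\varphi_{\lambda}}\Lambda_{\delta}^{(\lambda)\prime}(x,s)-\lambda\Lambda_{\delta}^{(\lambda)}(x,s)
\end{equation*}
converges pointwise to $(\mathbb{E}[X_{1}]-\delta)\Lambda_{\delta}^{\prime}(x,s)$, and dominated convergence then yields
\begin{equation*}
\lambda\!\int_{0}^{r}\mathrm{e}^{-\lambda u}\mathcal{B}_{\delta}^{(\lambda)}(x,u)\,\mathrm{d}u\longrightarrow(\mathbb{E}[X_{1}]-\delta)\!\int_{0}^{r}\Lambda_{\delta}^{\prime}(x,u)\,\mathrm{d}u,
\end{equation*}
while the remaining double-integral contribution in \eqref{LapcumuU} is $O(\lambda)$, since $\lambda\mathcal{B}_{\delta}^{(\lambda)}(x,\cdot)$ stays bounded on $[0,r]$ for small $\lambda$, and therefore vanishes. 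Combining the two limits reproduces exactly \eqref{cumuParruinU}.

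The only non-routine point is the dominated-convergence justification, which reduces to a uniform bound on $|\lambda\mathcal{B}_{\delta}^{(\lambda)}(x,s)|$ for $(\lambda,s)$ ranging over a neighbourhood of $\{0\}\times[0,r]$. This follows from joint continuity in $\lambda$ of the scale function $\mathbb{W}_{\lambda}$ and the delayed scale function $\Lambda_{\delta}^{(\lambda)}$, together with the finite limit of $\lambda/\varphi_{\lambda}$ already noted. Apart from that technical verification, the whole argument is a direct parallel of the derivation of Corollary~\ref{LaplaceTr} from Theorem~\ref{maintheoParisian}, with $W$, $\Lambda$ and $\mathbb{E}[X_{1}]$ replaced by $\mathbb{W}$, $\Lambda_{\delta}$ and $\mathbb{E}[X_{1}]-\delta$ throughout.
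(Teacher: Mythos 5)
Your proof is correct and is essentially the argument the paper intends: the corollary is stated without proof, and its non-refracted counterpart (Corollary \ref{LaplaceTr}) is obtained by the same $\lambda \to 0$ limit, the key ingredient in both cases being \eqref{limesp} applied to $Y$, i.e. $\lambda/\varphi_\lambda \to \mathbb{E}[X_1]-\delta$ under the net profit condition $\mathbb{E}[X_1]>\delta$. Whether one passes to the limit in the Laplace transform of $\sigma^U_r$ from Theorem \ref{maintheoParisianU}, as you do, or first in the density of $\mathcal{O}^{U}_{\mathrm{e}_{\lambda}}$ from Theorem \ref{maintheoU} and then integrates over $(r,\infty)$, is an immaterial reorganization of the same computation.
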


It is a trivial exercise to show that when $\delta=0$, the results reduced
to those given in Section \ref{section1}.

\begin{remark}
The above expression can also be expressed as follows,  
\begin{equation}  \label{cummU}
\mathbb{P}_{x}\left( \sigma _{r}^{U}<\infty \right) =1-\left( \mathbb{E}%
[X_1]-\delta \right) \left( \dfrac{w(x;0)}{1-\delta W(0)}+\int_{0}^{r}%
\Lambda _{\delta }^{\prime }\left( x,s\right) \mathrm{d} s\right) ,
\end{equation}
which is due to the following useful identity relating different scale 
functions and taken from \cite{renaud2014}, that is, for $p,q\geq 0$ and $
x\in \mathbb{R}$,  
\begin{equation*}
(q-p)\int_{0}^{x}\mathbb{W}_{p}(x-y)W_{q}(y)\mathrm{d}y=W_{q}(x)-\mathbb{W}
_{p}(x)+\delta \left( W_{q}(0)\mathbb{W}_{p}(x)+\int_{0}^{x}\mathbb{W}
_{p}(x-y)W_{q}^{\prime }(y)\mathrm{d}y\right) ,
\end{equation*}
and for which we consider $p=q=0$. Note that when $\delta =0$, we recover 
the spectrally negative analogue in \eqref{convsln}. Letting $r\rightarrow 0$
in \eqref{cummU}, we recover the classical probability of ruin of $U$  
\begin{equation*}
\mathbb{P}_{x}\left( \kappa _{0}^{-}<\infty \right) =1-\frac{\left( \mathbb{E%
}\left[ X_{1} \right] -\delta \right) }{1-\delta W(0)}w(x;0).
\end{equation*}
\end{remark}
\subsection{Examples\label{sectionexamprefracted}}

As mentioned in the introduction, to the best of our knowledge, the distribution of the finite horizon occupation time is known only for Brownian motions with drift and Cram\'{e}r-Lundberg process with exponential claims in the literature. In this
section, by a further inversion of \eqref{mainresU}, we are able to derive
the distribution of the finite horizon occupation time formula for refracted
Brownian risk process and refracted Cram\'{e}r-Lundberg process with
exponential claims. Both formulas are new in the literature.

\subsubsection{A refracted Brownian risk process}

Let $X$ and $Y$ be two Brownian risk processes, defined as
\begin{equation*}
X_t - X_0 = \mu t +\sigma B_t \quad \text{and} \quad Y_t -Y_0 = (\mu-\delta)
t + \sigma B_t , 
\end{equation*}
where $B=\{B_t, t\geq 0\}$ is a standard Brownian motion. In this case, the
Laplace exponent of $X$ is given by 
\begin{equation*}
\psi(\lambda) = \mu \lambda + \frac12\sigma^2 \lambda^2. 
\end{equation*}
Then, for $x\geq 0$, we have 
\begin{align*}
W(x) &= \frac{1}{\mu} \left( 1-\mathrm{e}^{-2\frac{\mu}{\sigma^2} x} \right),
\\
\mathbb{W}(x) &= \frac{1}{\mu-\delta} \left( 1-\mathrm{e}^{-2\frac{\mu-\delta%
}{\sigma^2} x} \right).
\end{align*}
and $\varphi _{\lambda }=\sigma ^{-2}\left( \sqrt{\left( \mu -\delta \right)
^{2}+2\lambda \sigma ^{2}}-\left( \mu -\delta \right) \right).$ Using the
fact that 
\begin{equation*}
\int_{0}^{\infty }\frac{z}{r}\mathbb{P}(X_{r}\in \mathrm{d}z)=\frac{\sigma }{%
\sqrt{2\pi r}}\mathrm{e}^{-\frac{c^{2}r}{2\sigma ^{2}}}+c\mathcal{N}\left( 
\frac{c\sqrt{r}}{\sigma }\right),
\end{equation*}
and 
\begin{equation*}
\int_{0}^{\infty }\frac{z}{r}\mathrm{e}^{-\frac{2\left( \mu -\delta \right) 
}{\sigma ^{2}}z}\mathbb{P}(X_{r}\in \mathrm{d}z)=\frac{\sigma }{\sqrt{2\pi r}%
}\mathrm{e}^{-\frac{c^{2}r}{2\sigma ^{2}}}+\frac{\left( 2\delta -\mu \right) 
}{\sqrt{2\pi }}\mathcal{N}\left( \sqrt{r}\frac{\left( 2\delta -\mu \right) }{%
\sigma }\right),
\end{equation*}
we obtain 
\begin{eqnarray*}
\Lambda _{\delta }\left( x,s\right) &=&\int_{0}^{\infty }\mathbb{W}\left(
x+z\right) \frac{z}{r}\mathbb{P}\left( X_{r}\in \mathrm{d}z\right) \\
&=&\frac{1}{\mu -\delta }\left( \frac{\sigma }{\sqrt{2\pi r}}\mathrm{e}^{-%
\frac{\mu ^{2}r}{2\sigma ^{2}}}+\mu \mathcal{N}\left( \frac{\mu \sqrt{r}}{%
\sigma }\right) \right) \\
&&-\frac{\mathrm{e}^{-\frac{2\left( \mu -\delta \right) }{\sigma ^{2}}x}}{%
\mu -\delta }\left( \frac{\sigma }{\sqrt{2\pi r}}\mathrm{e}^{-\frac{\mu ^{2}r%
}{2\sigma ^{2}}}+\mathrm{e}^{\frac{2r\delta \left( \delta -2\mu \right) }{%
\sigma ^{2}}}\frac{\left( 2\delta -\mu \right) }{\sqrt{2\pi }}\mathcal{N}%
\left( \sqrt{r}\frac{\left( 2\delta -\mu \right) }{\sigma }\right) \right)
\end{eqnarray*}%
Hence, 
\begin{equation*}
\Lambda _{\delta }^{\prime }\left( x,s\right) =\frac{2}{\sigma ^{2}}\mathrm{e%
}^{-\frac{2\left( \mu -\delta \right) }{\sigma ^{2}}x}\left( \frac{\sigma }{%
\sqrt{2\pi r}}\mathrm{e}^{-\frac{\mu ^{2}r}{2\sigma ^{2}}}+\mathrm{e}^{\frac{%
2r\delta \left( \delta -2\mu \right) }{\sigma ^{2}}}\frac{\left( 2\delta
-\mu \right) }{\sqrt{2\pi }}\mathcal{N}\left( \sqrt{r}\frac{\left( 2\delta
-\mu \right) }{\sigma }\right) \right) .
\end{equation*}
We also have 
\begin{equation*}
\frac{\mathrm{e}^{-\lambda r}}{\varphi _{\lambda }}=\int_{0}^{\infty }%
\mathrm{e}^{-\lambda t}\left( \left( \mu -\delta \right) +\frac{\sigma 
\mathrm{e}^{-\left( \mu -\delta \right) ^{2}/2\sigma ^{2}\left( t-s\right) }%
}{\sqrt{2\pi \left( t-s\right) }}-\left( \mu -\delta \right) \mathcal{N}%
\left( \frac{\left( \mu -\delta \right) \sqrt{r}}{\sigma }\right) \right) 
\mathrm{d}t
\end{equation*}
Using Laplace inversion, we finally obtain 
\begin{eqnarray*}
\mathbb{P}\left( \mathcal{O}^{U}_{t}\in \mathrm{d} s\right) &=&\frac{2}{%
\sigma ^{2}}\left( \frac{\sigma }{\sqrt{2\pi r}}\mathrm{e}^{-\frac{\mu ^{2}r%
}{2\sigma ^{2}}}+\mathrm{e}^{\frac{2r\delta \left( \delta -2\mu \right) }{%
\sigma ^{2}}}\frac{\left( 2\delta -\mu \right) }{\sqrt{2\pi }}\mathcal{N}%
\left( \sqrt{r}\frac{\left( 2\delta -\mu \right) }{\sigma }\right) \right) \\
&&\times \left( \left( \mu -\delta \right) +\frac{\sigma \mathrm{e}^{-\left(
\mu -\delta \right) ^{2}/2\sigma ^{2}\left( t-s\right) }}{\sqrt{2\pi \left(
t-s\right) }}-\left( \mu -\delta \right) \mathcal{N}\left( \frac{\left( \mu
-\delta \right) \sqrt{r}}{\sigma }\right) \right) \mathrm{d}s,
\end{eqnarray*}

\subsubsection{A refracted Cram\'{e}r-Lundberg process with exponential claims}

Let $X$ and $Y$ be two Cram\'{e}r-Lundberg risk processes with exponentially distributed claims, then they are defined as 
\begin{equation*}
X_t - X_0 = c t - \sum_{i=1}^{N_t} C_i \quad \text{and} \quad Y_t - Y_0 =
(c-\delta) t - \sum_{i=1}^{N_t} C_i , 
\end{equation*}
where $N=\{N_t, t\geq 0\}$ is a Poisson process with intensity $\eta>0$, and $\{C_1, C_2, \dots\}$ are independent and exponentially distributed
random variables with mean $1/\alpha$, independent of $N$. In this case, the Laplace exponent of $X$ is given by 
\begin{equation*}
\psi(\lambda) = c \lambda + \eta \left( \frac{\alpha}{\lambda+\alpha} - 1
\right) , \quad \text{for $\lambda>-\alpha$} 
\end{equation*}
and the net profit condition is given by $\mathbb{E} \left[ Y_1 \right]
=c_\delta- \eta/\alpha \geq 0$ where $c_\delta= c - \delta$. Then, for $%
x\geq 0$, we have 
\begin{align*}
W(x) &= \frac{1}{c-\eta/\alpha} \left( 1- \frac{\eta}{c\alpha}\mathrm{e}^{(%
\frac{\eta}{c}-\alpha)x} \right), \\
\mathbb{W}(x) &= \frac{1}{c_\delta-\eta/\alpha} \left( 1- \frac{\eta}{%
c_\delta\alpha}\mathrm{e}^{\left(\frac{\eta}{c_\delta}-\alpha\right)x}
\right), \\
\end{align*}
The right-inverse is given by 
\begin{equation*}
\varphi _{\lambda }=\frac{1}{2c_\delta}\left( \lambda +\eta -c_\delta\alpha +%
\sqrt{\left(\lambda +\eta -c_\delta\alpha \right) ^{2}+4c_\delta)\alpha
\lambda }\right) .
\end{equation*}%
We have 
\begin{eqnarray*}
\Lambda _{\delta }\left( x,s\right) &=&\int_{0}^{\infty }\mathbb{W}\left(
x+z\right) \frac{z}{r}\mathbb{P}\left( X_{r}\in \mathrm{d}z\right) \\
&=&\frac{1}{c_\delta -\eta /\alpha }\int_{0}^{\infty }\left( 1-\frac{\eta }{%
\left( c_\delta \right) \alpha }\mathrm{e}^{\left( \frac{\eta }{ c_\delta }%
-\alpha \right) \left( x+z\right) }\right) \frac{z}{r}\mathbb{P}\left(
X_{r}\in \mathrm{d}z\right) ,
\end{eqnarray*}%
where 
\begin{eqnarray*}
&&\int_{0}^{\infty }z\mathrm{e}^{\left( \frac{\eta }{c_{\delta }}-\alpha
\right) z}\mathbb{P}\left( X_{r}\in \mathrm{d}z\right) \\
&=&\int_{0}^{cs}z\mathrm{e}^{\left( \frac{\eta }{c_{\delta }}-\alpha \right)
z}\mathrm{e}^{-\eta r}\left( \delta _{0}\left( cr-\mathrm{d}z\right) +%
\mathrm{e}^{-\alpha (cr-z)}\sum_{m=0}^{\infty }\frac{(\alpha \eta r)^{m+1}}{%
m!(m+1)!}(cr-z)^{m}\mathrm{d}z\right) \\
&=&\int_{0}^{cs}\mathrm{e}^{-\eta r}\left( cr\mathrm{e}^{\left( \frac{\eta }{%
c_{\delta }}-\alpha \right) cr}+\mathrm{e}^{-\alpha cr}z\mathrm{e}^{\frac{%
\eta }{c_{\delta }}z}\sum_{m=0}^{\infty }\frac{(\alpha \eta r)^{m+1}}{%
m!(m+1)!}(cr-z)^{m}\mathrm{d}z\right) \\
&=&\mathrm{e}^{-r\left( \eta +\alpha c+\frac{\eta c}{c_{\delta }}\right)
}\left( cr+\sum_{m=0}^{\infty }\frac{(\alpha \eta r)^{m+1}}{m!(m+1)!}%
\int_{0}^{cr}\mathrm{e}^{-\frac{\eta }{c_{\delta }}y}y^{m}\left( cr-y\right) 
\mathrm{d}z\right) \\
&=&cr\mathrm{e}^{-\eta r+\left( \frac{\eta }{c_{\delta }}-\alpha \right) cr}
\\
&&\times \left( cr+\sum_{m=0}^{\infty }\frac{(\eta s)^{m+1}}{m!(m+1)!}\left[
cr\Gamma (m+1,\frac{\eta cr}{c_{\delta }})-\frac{c_{\delta }}{\left( \eta
+c_{\delta }\alpha cr\right) }\Gamma (m+2,\frac{cr\eta }{c_{\delta }})\right]
\right) .
\end{eqnarray*}
Then, 
\begin{multline*}
\Lambda ^{\prime }\left( 0,s\right) =-\frac{\eta c\mathrm{e}^{-\eta
r+\left( \frac{\eta }{c_{\delta }}-\alpha \right) cr}}{c_{\delta }^{2}}+%
\frac{\eta \mathrm{e}^{-\eta r+\left( \frac{\eta }{c_{\delta }}-\alpha
\right) cr}}{c_{\delta }^{2}} \\
\times \sum_{m=0}^{\infty }\frac{(\eta s)^{m+1}}{m!(m+1)!}\left[ cr\Gamma
(m+1,\frac{\eta cr}{c_{\delta }})-\frac{c_{\delta }}{\left( \eta +c_{\delta
}\alpha cr\right) }\Gamma (m+2,\frac{cr\eta }{c_{\delta }})\right] .
\end{multline*}
Since $X$ is of bounded variation paths (i.e., $\mathbb{W}_{\lambda }(0)>0$%
), we have 
\begin{eqnarray*}
\lambda ^{-1}\mathbb{P}\left( \mathcal{O}^{U}_{\mathrm{e}_{\lambda }}\in 
\mathrm{d}s\right)=\frac{1}{\varphi _{\lambda }}\mathbb{W}_{\lambda }\left(
0\right) \delta _{0}\left( \mathrm{d}s\right) +\frac{\mathrm{e}^{-\lambda s}%
}{\varphi _{\lambda }}\Lambda _{\delta }^{\prime }\left( 0,s\right) \mathrm{d%
}s.
\end{eqnarray*}%
Also, we have 
\begin{equation*}
\frac{1}{\varphi _{\lambda } c_\delta }=\int_{0}^{\infty }\mathrm{e}%
^{-\lambda t}a_{t}^{\delta }\mathrm{d}t,
\end{equation*}%
where 
\begin{equation*}
a_{t}^{\delta }=\left( 1-\frac{\eta }{c_\delta \alpha }\right) _{+}+\frac{2
\eta }{\pi }\mathrm{e}^{-\left( \eta +c_\delta \alpha \right) t}\int_{-1}^{1}%
\frac{\sqrt{1-u^{2}}\mathrm{e}^{-2\sqrt{ c_\delta \alpha \eta }tu}}{\eta
+c_\delta \alpha +2\sqrt{c_\delta \alpha \eta }u}\mathrm{d}t.
\end{equation*}%
We then obtain the following expression for the distribution of the
occupation time $\mathcal{O}^{U}_{t}$ for $t>0$ :
\begin{equation*}
\mathbb{P}\left( \mathcal{O}^{U}_{t}\in \mathrm{d}s\right)
=a^{\delta}_{t}\delta _{0}\left( \mathrm{d} s\right) +c_\delta\Lambda_\delta
^{\prime }\left( 0,s\right) a^{\delta}_{t-s}\mathbf{1}_{\left( 0,t\right)
}\left( s\right) \mathrm{d} s.
\end{equation*}
\section*{Acknowledgements}

Support from grants from the Natural Sciences and Engineering Research
Council of Canada is gratefully acknowledged by David Landriault and Bin Li
(grant numbers 341316 and 05828, respectively). Support from the Canada
Research Chair Program is gratefully acknowledged by David Landriault.

\bibliographystyle{alpha}
\bibliography{REFERENCES}

\end{document}